\newtheorem{thm}{Theorem}
\newtheorem{cor}{Corollary}
\newtheorem{prop}{Proposition}
\newtheorem{defn}{Definition}
\newtheorem{exam}{Example}
\newenvironment{proof}{{\bf Proof\,\,}}{\endproof\par}
\def \openbox{$\sqcup\llap{$\sqcap$}$}
\def \endproof{\enskip \null \nobreak \hfill \openbox \par}
\newcommand{\bz}{\mathbf 0}
\newcommand{\R}{\mathbb R}
\newcommand{\C}{\mathbb C}
\newcommand{\limto}{\rightarrow}
\begin{document}
\title{Fast and Near--Optimal Matrix Completion via Randomized Basis Pursuit}
\author{Zhisu Zhu\thanks{Institute for Computational and Mathematical Engineering, Stanford University, Stanford, CA 94305. E--mail: {\tt zhuzhisu@stanford.edu}} \and Anthony Man--Cho So\thanks{Department of Systems Engineering and Engineering Management, The Chinese University of Hong Kong, Shatin, N.~T., Hong Kong. E--mail: {\tt manchoso@se.cuhk.edu.hk}.  This research is supported by Hong Kong Research Grants Council (RGC) General Research Fund (GRF) Project No.~CUHK 2150603.} \and Yinyu Ye\thanks{Department of Management Science and Engineering, Stanford University, Stanford, CA 94305. E--mail: {\tt yinyu-ye@stanford.edu}}}
\maketitle

\begin{abstract}
Motivated by the philosophy and phenomenal success of compressed sensing, the problem of reconstructing a matrix from a sampling of its entries has attracted much attention recently.  Such a problem can be viewed as an information--theoretic variant of the well--studied matrix completion problem, and the main objective is to design an {\it efficient} algorithm that can reconstruct a matrix by inspecting only a {\it small} number of its entries.  Although this is an impossible task in general, Cand\`{e}s and co--authors have recently shown that under a so--called {\it incoherence} assumption, a rank $r$ $n\times n$ matrix can be reconstructed using semidefinite programming (SDP) after one inspects $O(nr\log^6n)$ of its entries.  In this paper we propose an alternative approach that is much more efficient and can reconstruct a larger class of matrices by inspecting a significantly smaller number of the entries.  Specifically, we first introduce a class of so--called {\it stable} matrices and show that it includes all those that satisfy the incoherence assumption.  Then, we propose a randomized basis pursuit (RBP) algorithm and show that it can reconstruct a stable rank $r$ $n\times n$ matrix after inspecting $O(nr\log n)$ of its entries.  Our sampling bound is only a logarithmic factor away from the information--theoretic limit and is essentially optimal.  Moreover, the runtime of the RBP algorithm is bounded by $O(nr^2\log n+n^2r)$, which compares very favorably with the $\Omega(n^4r^2\log^{12}n)$ runtime of the SDP--based algorithm.  Perhaps more importantly, our algorithm will provide an {\it exact} reconstruction of the input matrix in polynomial time.  By contrast, the SDP--based algorithm can only provide an {\it approximate} one in polynomial time.
\end{abstract}

{\bf Keywords:} Matrix Completion, Stable Matrices, Randomized Basis Pursuit, Randomized Algorithms, Analysis of Algorithms 

\smallskip
{\bf MSC2000 Subject Classification:} 65F30, 68Q25, 68W20

\section{Introduction}
A fundamental problem that arises frequently in many disciplines is that of reconstructing a matrix with certain properties from some partial information.  Typically, such a problem is motivated by the desire to deduce global structure from a (small) number of local observations.  For instance, consider the following applications:
\begin{itemize}
   \item {\bf Covariance Estimation.} In areas such as statistics, machine learning and wireless communications, it is often of interest to find the {\it maximum likelihood estimate} of the covariance matrix $\Sigma\in\C^{m\times m}$ of a random vector $v\in\C^m$.  Such an estimate can be used to study the relationship among the variables in $v$, or to give some indication on the performance of certain systems.  Usually, extra information is available to facilitate the estimation.  For instance, we may have a number of independent samples that are drawn according to the distribution of $v$, as well as some structural constraints on $\Sigma$ (e.g., certain entries of $\Sigma^{-1}$ have prescribed values \cite{DVR08}, the matrix $\Sigma$ has a Toeplitz structure and some of its entries have prescribed values \cite{D08}, etc.).  Thus, the estimation problem becomes that of completing a partially specified matrix so that the completion satisfies the structural constraints and maximizes certain likelihood function.

   \item {\bf Graph Realization.} It is a trivial matter to see that given the coordinates of $n$ points in $\R^k$, the distance between any two points can be computed efficiently.  However, the inverse problem --- given a subset of interpoint distances, find the coordinates of points (called a {\it realization}) in $\R^k$ (where $k\ge1$ is fixed) that fit those distances --- turns out to be anything but trivial (see, e.g., \cite{Saxe79,SY06,S07}).  Such a problem arises in many different contexts, such as sensor network localization (see, e.g., \cite{BY04,SY07}) and molecular conformation (see, e.g, \cite{HW85,CH88}), and is equivalent to the problem of completing a partially specified matrix to an {\it Euclidean distance matrix} that has a certain rank (cf.~\cite{L98b,L01}).

   \item {\bf Recovering Structure from Motion.} A fundamental problem in computer vision is to reconstruct the structure of an object by analyzing its motion over time.  This problem, which is known as the {\it Structure from Motion (SfM) Problem} in the literature, can be formulated as that of finding a low--rank approximation to certain measurement matrix (see, e.g., \cite{CS04}).  However, due to the presence of occlusion or tracking failures, the measurement matrix often has missing entries.  When one takes into account such difficulties, the reconstruction problem becomes that of completing a partially specified matrix to one that has a certain rank (see, e.g., \cite{CS04}).

   \item {\bf Recommendation Systems.} Although electronic commerce has offered great convenience to customers and merchants alike, it has complicated the task of tracking and predicting customers' preferences.  To cope with this problem, various {\it recommendation systems} have been developed over the years (see, e.g., \cite{GNOT92,SKKR01,DGM08}).  Roughly speaking, those systems maintain a matrix of preferences, where the rows correspond to users and columns correspond to items.  When an user purchases or browses a subset of the items, she can specify her preferences for those items, and those preferences will then be recorded in the corresponding entries of the matrix.  Naturally, if an user has not considered a particular item, then the corresponding entry of the matrix will remain unspecified.  Now, in order to predict users' preferences for the unseen items, one will have to complete a partially specified matrix so that the completion maximizes certain performance measure (such as each individual's utility \cite{KRRT01}).
\end{itemize}
Note that in all the examples above, we are forced to take whatever information is given to us.  In particular, we cannot, for instance, specify which entries of the unknown matrix to examine.  As a result, the reconstruction problem can be ill--posed (e.g., there may not be a unique or even any solution that satisfies the given criteria).  This is indeed an important issue.  However, we shall not address it in this paper (see, e.g., \cite{J80,L98b,H01,L01} for related work).  Instead, we take a different approach and consider the information--theoretic aspects of the reconstruction problem.  Specifically, let $A\in\R^{m\times n}$ be the rank $r$ matrix that we wish to reconstruct.  For the sake of simplicity, suppose that $r$ is known.  Initially, {\it no} information about $A$ (other than its rank) is available.  However, we are allowed to inspect {\it any} entry of $A$ and inspect as many entries as we desire in order to complete the reconstruction.  Of course, if we inspect all $mn$ entries of $A$, then we will be able to reconstruct $A$ exactly.  Thus, it is natural to ask whether we can inspect only a {\it small} number of entries and still be able to reconstruct $A$ in an {\it efficient} manner. Besides being a theoretical curiosity, such a problem does arise in practical applications.  For instance, in the sensor network localization setting \cite{SY07}, the aforementioned problem is tantamount to asking which of the pairwise distances are needed in order to guarantee a successful reconstruction of the network topology.  It turns out that if the number of required pairwise distances is small, then we will be able to efficiently reconstruct the network topology by performing just a few distance measurements and solving a small semidefinite program (SDP) \cite{ZSY09}.

To get an idea of what we should aim for, let us first determine the degrees of freedom available in specifying the rank $r$ matrix $A\in\R^{m\times n}$.  This will give us a lower bound on the number of entries of $A$ we need to inspect in order to guarantee an exact reconstruction.  Towards that end, consider the singular value decomposition (SVD) $A=U\Sigma V^T$, where $U\in\R^{m\times r}$ and $V\in\R^{n\times r}$ have orthonormal columns, and $\Sigma\in\R^{r\times r}$ is a diagonal matrix.  Clearly, there are $r$ degrees of freedom in specifying $\Sigma$.  Now, observe that for $i=1,2,\ldots,r$, the $i$--th column of $U$ must be orthogonal to all of the previous $i-1$ columns, and that it must have unit length.  Thus, there are $m-i$ degrees of freedom in specifying the $i$--th column of $U$, which implies that there are $\sum_{i=1}^r(m-i)=r(2m-r-1)/2$ degrees of freedom in specifying $U$.  By the same argument, there are $\sum_{i=1}^r(n-i)=r(2n-r-1)/2$ degrees of freedom in specifying $V$.  Hence, we have:
$$ \Delta \equiv r + \frac{r(2m-r-1)}{2} + \frac{r(2n-r-1)}{2} = r(m+n-r) $$
degrees of freedom in specifying the matrix $A$.  In particular, this implies that we need to inspect at least $\Delta$ entries of $A$, for otherwise there will be infinitely many matrices that are consistent with the observations, and we will not be able to reconstruct $A$ exactly.

Now, a natural question arises whether it is possible to reconstruct $A$ by inspecting just $\Theta(\Delta)$ of its entries.  A moment of thought reveals that the answer is no, as the information that is crucial to the reconstruction of $A$ may concentrate in only a few entries.  For instance, consider the rank one $n\times n$ matrix $A=e_1e_1^T$, where $e_1=(1,0,\ldots,0)\in\R^n$.  This is a matrix with only one non--zero entry, and it is clear that if we do not inspect that entry, then there is no way we can reconstruct $A$ exactly.

From the above example, we see that our ability to reconstruct $A$ depends not only on the {\it number} of entries we inspect, but also on {\it which} entries we inspect and on the {\it structure} of $A$.  This motivates the following question: are there matrices for which exact reconstruction is possible after inspecting only $\Theta(\Delta)$ of the entries?  More generally, is there any tradeoff between the ``niceness'' of the structure of $A$ and the number of entries we need to inspect in order to reconstruct $A$?

\subsection{Related Work}
In a recent work \cite{CR09}, Cand\`{e}s and Recht studied the above questions and proposed a solution that is based on ideas from compressed sensing and convex optimization.  They first defined a notion called {\it coherence}, which can be viewed as a measure of the niceness of a matrix and is motivated by a similar notion in the compressed sensing literature \cite{CR07}.  Informally, a matrix has low coherence if the information that is crucial to its reconstruction is well--spread (cf.~the case where $A=e_1e_1^T$).  Then, they proposed the following algorithm for reconstructing any $m\times n$ matrix $A$:

\medskip
\noindent{\underline{\sc The Cand\`{e}s--Recht Algorithm}}
\begin{enumerate}
   \item Let $\Gamma$ be a uniformly random subset of $\{1,\ldots,m\}\times\{1,\ldots,n\}$ with given cardinality $|\Gamma|\ge1$.  Inspect the $(i,j)$--th entry of $A$ if $(i,j)\in\Gamma$, thus obtaining a set of values $\{A_{ij}:(i,j)\in\Gamma\}$.

   \item Output an optimal solution to the following optimization problem:
   \begin{equation} \label{eq:minnuclear}
      \begin{array}{c@{\quad}l@{\quad}l}
         \mbox{minimize} & \|X\|_* \\
         \noalign{\smallskip}
         \mbox{subject to} & X_{ij}=A_{ij} & \mbox{for } (i,j) \in \Gamma \\
         \noalign{\smallskip}
         & X \in \R^{m\times n}
      \end{array}
   \end{equation}
   Here, $\|X\|_*$ is the so--called {\it nuclear norm} of $X$ and is defined as the sum of all the singular values of $X$.
\end{enumerate}
Cand\`{e}s and Recht showed that if $A$ has low coherence, then whenever $|\Gamma|=\Omega(N^{5/4}r\log N)$, where $N=\max\{m,n\}$ and $r=\mbox{rank}(A)$, the solution to problem (\ref{eq:minnuclear}) will be unique and equal to $A$ with high probability.  In other words, by inspecting $O(N^{5/4}r\log N)$ randomly chosen entries of $A$ and then solving the optimization problem (\ref{eq:minnuclear}), one can reconstruct $A$ exactly with high probability.  Note that problem (\ref{eq:minnuclear}) can be formulated as a SDP; see, e.g., \cite[Chapter 5]{F02}.  As such, it can be solved to any desired accuracy in polynomial time (see, e.g., \cite{GLS93,T01}).  However, if one uses standard SDP solvers, then the runtime of the Cand\`{e}s--Recht algorithm is at least on the order of $\max\{N^{9/2}r^2\log^2N,N^{15/4}r^3\log^3N\}$ (see, e.g., \cite{T01,LV09}), which severely limits its use in practice.  Although specialized algorithms are being developed to solve the SDP associated with problem (\ref{eq:minnuclear}) more efficiently (see, e.g., \cite{CCS08,HH09,LV09,ZSY09}), they either do not have any theoretical time bound, or their runtimes can still be prohibitively high when $N$ is large (at least on the order of $N^{9/2}r^2\log^2N$).

Subsequent to the work of Cand\`{e}s and Recht, improvements have been made by various researchers on both the sampling and runtime bounds for the problem.  In \cite{KMO09}, Keshavan et al.~proposed a reconstruction algorithm that is based on the SVD and a certain manifold optimization procedure.  They showed that if the input matrix $A$ has low coherence {\it and} low rank, then by sampling $|\Gamma|=\Omega(Nr\max\{\log N,r\})$ entries of $A$ uniformly at random, their algorithm will produce a sequence of iterates that converges to $A$ with high probability.  Note that the sampling complexity of Keshavan et al.'s algorithm is just a polylogarithmic factor away from the information--theoretic minimum $\Delta$ and hence is almost optimal.  However, their result applies only when the rank of $A$ is bounded above by $N^{1/2}$, and the ratio between the largest and smallest singular values is bounded.  Moreover, there is no theoretical time bound for their algorithm.  Around the same time, Cand\`{e}s and Tao \cite{CT09} refined the analysis in \cite{CR09} and showed that the sampling complexity of the Cand\`{e}s--Recht algorithm can be reduced to $|\Gamma|=\Omega(Nr\log^6N)$ when the input matrix $A$ has low coherence (but not necessarily low rank).  Again, this is just a polylogarithmic factor away from the information--theoretic minimum $\Delta$.  However, the runtime of the algorithm remains high (at least on the order of $N^4r^2\log^{12}N$).

\subsection{Our Contribution}
From the above discussion, we see that it is desirable to design a reconstruction algorithm that can work for a large class of matrices and yet still has low sampling and computational complexities.  In this paper we make a step towards that goal.  Specifically, our contribution is twofold.  First, we introduce the notion of {\it $k$--stability}, which again can be viewed as a measure of the niceness of a matrix.  Roughly speaking, an $m\times n$ rank $r$ matrix is $k$--stable if {\it every} $m\times(n-k)$ sub--matrix of $A$ has rank $r$, but {\it some} $m\times(n-k-1)$ sub--matrix of $A$ has rank $r-1$.  Intuitively, if a low--rank matrix has high stability (i.e.~when $k$ is large), then the information that is crucial to its reconstruction is present in many small subsets of the columns, and hence the matrix should be more amenable to exact reconstruction.  As it turns out, the notion of $k$--stability is related to the so--called {\it Maximum Distance Separable (MDS) codes} in coding theory \cite[Chapter 11]{MS77}.  Moreover, from the above informal definition, we see that $k$--stability is a {\it combinatorial} property of matrices, which should be contrasted with the more {\it analytic} nature of the notion of coherence as defined in \cite{CR09}.  Nevertheless, there is a strong connection between those two notions.  More precisely, we show that if a matrix has low coherence, then it must have high stability.  Such a connection opens up the possibility of comparing our results to those in \cite{CR09,KMO09,CT09}.

Secondly, we propose a randomized basis pursuit (RBP) algorithm for the reconstruction problem.  Our algorithm differs from that of Cand\`{e}s and Recht \cite{CR09} and Keshavan et al.~\cite{KMO09} in two major aspects:
\begin{enumerate}
   \item We do not sample the matrix {\it entries} in a uniform fashion.  Instead, we sample the {\it columns} (or {\it rows}) of the matrix uniformly.  We note that such a sampling strategy is reminiscent of that used for constructing low--rank approximations to a given matrix \cite{FKV04,DKM06,RV07}.  However, there is one crucial difference, namely, our sampling strategy does not require any knowledge of the input matrix.  By contrast, the strategy used in \cite{FKV04,DKM06,RV07} assumes that the norm of each column of the input matrix is known.

   \item Our algorithm does not involve any optimization procedure and will produce an {\it exact} solution in polynomial time.  This should be contrasted with the SDP--based algorithm of Cand\`{e}s and Recht, which can only return an {\it approximate} solution in polynomial time (see \cite{PK97} for discussions on the complexity of solving SDPs); and with the spectral method of Keshavan et al., which is known to converge to an exact solution but has no theoretical time bound.
\end{enumerate} 
Regarding the performance of our algorithm, we show that if the input matrix $A$ has high stability (in particular, this includes the case where $A$ has low coherence), then by sampling $O(Nr\log N)$ entries of $A$ using our column sampling procedure, we can reconstruct $A$ exactly with high probability.  Furthermore, we show that the runtime of our algorithm is bounded above by $O(Nr^2\log N+N^2r)$.  Thus, on both the sampling and computational complexities, our bounds yield substantial improvement over those in \cite{KMO09,CT09}.  Moreover, our sampling bound is essentially optimal, as the extra $\log N$ factor can be attributed to the coupon collecting phenomenon \cite[Chapter 3]{MR95} (see \cite{CR09,KMO09,CT09} for related discussions).

\subsection{Outline of the Paper}
In Section \ref{sec:stablematrix} we first introduce the notion of a $k$--stable matrix and derive some of its properties.  Then, we study the relationship between the notion of $k$--stability and the notion of coherence defined in \cite{CR09}.  Afterwards, we study some constructions of $k$--stable matrices and show that they are in fact quite ubiquitous.  In Section \ref{sec:RBP} we propose a randomized basis pursuit (RBP) algorithm for the matrix reconstruction problem and analyze its sampling and computational complexities.  Although the RBP algorithm assumes that the rank of the input matrix is known, we show how such an assumption can be removed in Section \ref{subsec:rank-free}.  Finally, we summarize our results and discuss some possible future directions in Section \ref{sec:concl}.

\section{The Class of $k$--Stable Matrices} \label{sec:stablematrix}
As mentioned in the Introduction, our ability to reconstruct a matrix depends in part on its structure.  In this paper we shall focus on the class of $k$--stable matrices, which is defined as follows:
\begin{defn}
A rank $r$ matrix $A\in\R^{m \times n}$ is said to be \emph{$k$--stable} for some $k\in\{0,1,\ldots,n-r\}$ if
\begin{enumerate}
   \item every $m \times (n-k)$ sub--matrix of $A$ has rank $r$; and

   \item there exists an $m \times (n-k-1)$ sub--matrix of $A$ with rank equal to $r-1$.
\end{enumerate}
In other words, the rank of a $k$--stable matrix $A$ remains unchanged under the removal of any of its $k$ columns.  We use $\mathcal{M}^{m \times n}(k,r)\subset\R^{m\times n}$ to denote the set of all $k$--stable rank $r$ $m\times n$ matrices, and use $\mathcal{M}^{m \times n}(k)\subset\R^{m\times n}$ to denote the set of all $k$--stable $m\times n$ matrices.
\end{defn}
Note that the notion of $k$--stability is defined with respect to the columns of a matrix.  Of course, we may also define it with respect to the rows.  However, unlike the notions of row rank and column rank --- which are equivalent --- a column $k$--stable matrix may not be row $k$--stable.  Unless otherwise stated, we shall refer to a {\it column} $k$--stable matrix simply as a $k$--stable matrix in the sequel.

As we shall see, the notion of $k$--stability has many nice properties.  For instance, it generalizes the notions of coherence defined in \cite{CR09,CT09}.  Moreover, a matrix with high stability (i.e.~$k=\Theta(n)$) can be reconstructed by a simple and efficient randomized algorithm with high probability.  Before we give the details of these results, let us first take a look at some (deterministic) constructions of $k$--stable matrices.
\begin{exam} \label{exam:first-row}
   Let $a=(a_1,\ldots,a_n)\in\R^n$ be any vector with no zero component.  Consider the $m\times n$ matrix $A$ whose first row is equal to $a^T$ and all other entries are zeroes.  Then, $A$ is an $(n-1)$--stable rank one matrix.
\end{exam}
\begin{exam}
   Let $n\ge1$ be an odd integer.  Let $e\in\R^n$ be the vector of all ones, and let
   $$ u=\left(-\frac{n-1}{2},-\frac{n-1}{2}+1,-\frac{n-1}{2}+2,\ldots,\frac{n-1}{2}\right)\in\R^n $$
   Consider the $n\times n$ matrix $A$ whose first row is equal to $e^T$ and the $i$--th row is equal to $u^T$, where $i=2,\ldots,n$.  It is then easy to verify that $A$ is an $(n-2)$--stable rank two matrix.
\end{exam}
\begin{exam}
   Let $m,n$ be integers with $n\ge m\ge1$.  Suppose that $u=(u_1,\ldots,u_m)\in\R^m$ and $v=(v_1,\ldots,v_n)\in\R^n$ are given.  Consider the $m\times n$ matrix $A$ defined by $A_{ij}=(u_i+v_j)^{-1}$, where $1\le i\le m$ and $1\le j\le n$.  The matrix $A$ is known as a \emph{Cauchy} matrix.  It is well--known that if the $u_i$'s are distinct, the $v_j$'s are distinct, and $u_i+v_j\not=0$ for all $1\le i\le m$ and $1\le j\le n$, then every square sub--matrix of $A$ is non--singular.  In particular, this implies that $A$ is an $(n-m)$--stable rank $m$ matrix.
\end{exam}

\subsection{Relation to the Notion of Coherence}
In all previous work on the matrix reconstruction problem \cite{CR09,KMO09,CT09}, the notion of coherence is used to measure the niceness of a matrix.  This immediately raises the question of whether $k$--stability and coherence are comparable notions.  It turns out that the former can be viewed as a generalization of the latter.  Before we formalize this statement, let us first recall the definition of coherence \cite{CR09}:
\begin{defn}
Let $U\subset\R^n$ be a subspace of dimension $r\ge1$, and let $P_U$ be the
orthogonal projection onto $U$. Then, the \emph{coherence} of $U$ is defined as:
$$ \mu(U)\equiv \frac{n}{r} \max_{1\leq i \leq n} \|P_Ue_i\|_2^2 $$
where $e_i\in\R^n$ is the $i$--th standard basis vector, for $i=1,\ldots,n$.
\end{defn}
A simple consequence of this definition is the following:
\begin{prop} \label{prop:orth-coherence}
   Let $U\in\R^{n\times r}$ be a matrix with orthonormal columns.  When viewed as a subspace of $\R^n$ (i.e.~the subspace spanned by the columns of $U$), the coherence of $U$ is given by:
$$ \mu(U) = \frac{n}{r}\max_{1\le i\le n} \|u_i\|_2^2 $$
where $u_i$ is the $i$--th row of $U$, for $i=1,\ldots,n$.
\end{prop}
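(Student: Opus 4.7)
The plan is to unwind the definition of coherence by exhibiting the orthogonal projector $P_U$ in matrix form and then simplifying the inner expression $\|P_U e_i\|_2^2$ using the orthonormality of the columns of $U$.

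First I would observe that since the columns of $U$ are orthonormal, $U^T U = I_r$, which means that the matrix $UU^T \in \mathbb{R}^{n \times n}$ is symmetric and idempotent, and its range is precisely the column span of $U$. Hence $P_U = UU^T$. Substituting this into the definition and using idempotence gives
\[
\|P_U e_i\|_2^2 = e_i^T (UU^T)(UU^T) e_i = e_i^T UU^T e_i = \|U^T e_i\|_2^2.
\]

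Next I would identify $U^T e_i$ as the vector formed by the $i$-th entry of each column of $U$, which is exactly the transpose of the $i$-th row $u_i$ of $U$. Therefore $\|U^T e_i\|_2^2 = \|u_i\|_2^2$, and substituting back into the definition of $\mu(U)$ yields the claimed formula.

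There is no real obstacle here; the statement is essentially a one-line computation once the projector is written as $UU^T$. The only thing worth being careful about is the convention for $u_i$: the proposition defines it as the $i$-th row of $U$, so one must note that $u_i$ is regarded as a vector in $\mathbb{R}^r$ (not $\mathbb{R}^n$) when taking its Euclidean norm, which is consistent with treating $U^T e_i$ as a column vector in $\mathbb{R}^r$.
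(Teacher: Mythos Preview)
Your proof is correct and is essentially the same as the paper's: the paper expands $P_U e_i = \sum_{j=1}^r (e_i^T v_j) v_j$ over the orthonormal columns $v_j$ and concludes $\|P_U e_i\|_2^2 = \sum_j (e_i^T v_j)^2 = \|u_i\|_2^2$, which is just the coordinate version of your matrix identity $P_U = UU^T$ and $\|UU^T e_i\|_2^2 = \|U^T e_i\|_2^2$.
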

\begin{proof}
   Let $v_1,\ldots,v_r\in\R^n$ be the columns of $U$.  Since $U$ has orthonormal columns, we have:
   $$ P_Ue_i = \sum_{j=1}^r (e_i^Tv_j)v_j \quad\mbox{for }i=1,\ldots,n $$
   whence:
   $$ \|P_Ue_i\|_2^2 = \sum_{j=1}^r (e_i^Tv_j)^2 = \|u_i\|_2^2 $$
as desired.
\end{proof}

\medskip
\noindent The following invariance property of $k$--stable matrices will be useful for establishing the relationship between $k$--stability and coherence:
\begin{prop} \label{prop:invar}
   Let $A\in\R^{m\times n}$ be an arbitrary matrix, and let $U\in\R^{p\times m}$ be a matrix with linearly independent columns (in particular, we have $p\ge m$).  Then, for any $k\in\{0,1,\ldots,n-r\}$, $A$ is a $k$--stable rank $r$ matrix iff $UA$ is so.
\end{prop}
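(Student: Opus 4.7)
The plan is to reduce the entire statement to a single rank-preservation fact: for any matrix $B\in\R^{m\times q}$, if $U\in\R^{p\times m}$ has linearly independent columns, then $\mathrm{rank}(UB)=\mathrm{rank}(B)$. Once this is in hand, $k$-stability of $A$ and of $UA$ become patently equivalent condition by condition.

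First I would establish the rank-preservation lemma. Since $U$ has linearly independent columns, the map $x\mapsto Ux$ from $\R^m$ to $\R^p$ is injective. For any $B\in\R^{m\times q}$, the column space of $UB$ is exactly $U\cdot\mathrm{col}(B)=\{Ux : x\in\mathrm{col}(B)\}$, and injectivity of $U$ means that this image has the same dimension as $\mathrm{col}(B)$. Hence $\mathrm{rank}(UB)=\mathrm{rank}(B)$. (Equivalently, $Ux=0$ forces $x=0$, so the null space of $B$ equals that of $UB$, and a rank-nullity argument gives the same conclusion.)

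Next I would apply this observation column-wise. For any index set $S\subseteq\{1,\ldots,n\}$, let $A_S$ denote the submatrix of $A$ consisting of the columns indexed by $S$. Then it is immediate from the definition of matrix multiplication that $(UA)_S=U\cdot A_S$, so by the lemma $\mathrm{rank}((UA)_S)=\mathrm{rank}(A_S)$. In particular, taking $S$ to range over all subsets of size $n-k$ shows that every $m\times(n-k)$ submatrix of $A$ has rank $r$ iff every $p\times(n-k)$ submatrix of $UA$ has rank $r$; taking $S$ of size $n-k-1$ handles the second condition in the definition of $k$-stability in the same way. Combining these equivalences with the global fact $\mathrm{rank}(UA)=\mathrm{rank}(A)=r$ gives the proposition.

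There is no real obstacle here: the only subtlety is ensuring that $U$'s injectivity (rather than invertibility) is enough, which it is, since injectivity on the ambient $\R^m$ suffices to preserve the dimension of every subspace passed through $U$, and in particular the column space of every submatrix $A_S$. No appeal to the shape or distribution of $A$ is needed, so the proof is essentially a two-line linear-algebra computation wrapped around the column-restriction identity $(UA)_S=UA_S$.
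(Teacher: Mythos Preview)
Your proposal is correct and follows essentially the same approach as the paper: both arguments reduce to the observation that, because $U$ has linearly independent columns, any collection of columns $\{a_{i_1},\ldots,a_{i_l}\}$ of $A$ has the same number of linearly independent vectors as the corresponding collection $\{Ua_{i_1},\ldots,Ua_{i_l}\}$ in $UA$, which is exactly your rank-preservation lemma applied to each column submatrix. The paper's proof is simply a terser statement of the same idea.
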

\begin{proof}
   Let $a_1,\ldots,a_n\in\R^m$ be the columns of $A$.  Then, the columns of $UA$ are given by $Ua_1,\ldots,Ua_n\in\R^p$.  Now, for any $l=1,\ldots,n$, the number of linearly independent vectors in the collection $\{a_{i_1},\ldots,a_{i_l}\}$ is the same as that in the collection $\{Ua_{i_1},\ldots,Ua_{i_l}\}$, since $U$ has full column rank.  This completes the proof.
\end{proof}

\medskip
\noindent We are now ready to state our first main result:
\begin{thm} \label{thm:stab-coher}
   Let $A\in\R^{m\times n}$ be a rank $r$ matrix whose SVD is given by $A= U\Sigma V^T$, where $U\in\R^{m\times r}$, $V\in\R^{n\times r}$ and $\Sigma\in\R^{r\times r}$.  For any non--negative integer $k\leq n-r$, if the coherence of $V$ satisfies $\mu(V)\leq \mu_0$ for some $\mu_0\in(0,\frac{n}{kr})$, then $A$ is column $s$--stable for some $s\geq k$.  Similarly, for any non--negative integer $k'\le m-r$, if $\mu(U)\leq \mu_0$ for some $\mu_0\in(0,\frac{m}{k'r})$, then $A$ is row $s$--stable for some $s\geq k'$.
\end{thm}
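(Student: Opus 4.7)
The plan is to reduce the column stability of $A$ to a spanning statement about the rows of $V$ via Proposition \ref{prop:invar}, and then use the coherence bound to rule out the failure of that spanning property.

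First I would observe that since $\Sigma \in \R^{r\times r}$ is nonsingular and $U\in\R^{m\times r}$ has orthonormal columns, the matrix $U\Sigma \in \R^{m\times r}$ has linearly independent columns. Writing $A = (U\Sigma)V^T$ and applying Proposition \ref{prop:invar} (with the roles of $U$ and $A$ in that proposition played by $U\Sigma$ and $V^T$, respectively), the column stability of $A$ matches that of the $r\times n$ matrix $V^T$. Hence it suffices to show that every $r\times (n-k)$ sub--matrix of $V^T$ has rank $r$; equivalently, that any set of $n-k$ rows of $V$ spans $\R^r$. This, in turn, is equivalent to saying that for every nonzero $w\in\R^r$, the vector $Vw\in\R^n$ has at least $k+1$ nonzero entries, since the non--spanning of some $n-k$ rows is equivalent to the existence of a nonzero $w$ orthogonal to those rows, i.e.~a nonzero $w$ with $\text{supp}(Vw)$ of size at most $k$.

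Next I would invoke the coherence bound. By Proposition \ref{prop:orth-coherence}, writing $v_i^T$ for the $i$--th row of $V$, we have $\|v_i\|_2^2 \le r\mu(V)/n \le r\mu_0/n$ for every $i$. For any nonzero $w\in\R^r$ with $\|w\|_2=1$, the orthonormality of the columns of $V$ gives $\|Vw\|_2^2 = 1$, and termwise Cauchy--Schwarz yields $(v_i^T w)^2 \le \|v_i\|_2^2 \le r\mu_0/n$. If the support of $Vw$ had size at most $k$, summing over this support would give $1 = \|Vw\|_2^2 \le k\cdot r\mu_0/n$, contradicting the hypothesis $\mu_0 < n/(kr)$. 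Therefore $|\text{supp}(Vw)| \ge k+1$ for every nonzero $w$, every $n-k$ rows of $V$ span $\R^r$, and so every $m\times(n-k)$ sub--matrix of $A$ has rank $r$. Taking $s$ to be the largest integer with this property shows that $A$ is column $s$--stable for some $s\ge k$.

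For the row statement, I would apply exactly the same argument to $A^T = V\Sigma U^T$: now $V\Sigma$ plays the role of the invertible factor, and the relevant coherence hypothesis $\mu(U)\le\mu_0 < m/(k'r)$ gives, via the rows of $U$, that every $m-k'$ rows of $U$ span $\R^r$, whence $A$ is row $s$--stable for some $s\ge k'$.

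The main obstacle I expect is essentially bookkeeping: setting up the correct equivalence between "$A$ is column $s$--stable for some $s\ge k$" and "every $m\times(n-k)$ sub--matrix of $A$ has rank $r$" (and then routing this through the rows of $V$). The analytic content is short --- once the support--size reformulation is in hand, the coherence bound closes the argument in one line --- so the care lies in applying Proposition \ref{prop:invar} to the right factorization and verifying that the maximal $s$ satisfying the rank condition does yield genuine $s$--stability (the existence of an $m\times(n-s-1)$ sub--matrix of rank $r-1$ being automatic from maximality, since $s\le n-r$).
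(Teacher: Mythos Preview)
Your proposal is correct, and the overall architecture matches the paper's proof: reduce via Proposition~\ref{prop:invar} to the column stability of $V^T$, then use the coherence bound on the rows of $V$ to derive a contradiction if some $(n-k)$--column sub--matrix of $V^T$ has deficient rank.

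The difference lies in how the contradiction is extracted. The paper argues by block decomposition: assuming $V^T$ is only $l$--stable for some $l\le k-1$, it writes $V^T=[RQ\,\,N]$ with $N\in\R^{r\times(l+1)}$ carrying the removed columns, uses orthonormality of the rows of $V^T$ to get $\|N\|_F^2\ge 1$, and then bounds $\|N\|_F^2$ from above via the row--norm (coherence) bound. This Frobenius--norm argument needs a separate treatment of the rank--one case. Your route --- recasting the rank condition as a support--size statement for $Vw$ and applying Cauchy--Schwarz termwise --- handles all $r$ uniformly and is a line shorter; it is effectively the rank--one slice of the paper's block argument, applied to the single direction $w$ witnessing the rank drop. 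Both approaches buy the same conclusion with the same hypotheses; yours is simply a cleaner packaging.
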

\begin{proof}
By Proposition \ref{prop:invar}, it suffices to show that $V^T$ is a column $k$--stable rank $r$ matrix, since $U\Sigma\in\R^{m\times r}$ is a matrix with linearly independent columns.  Now, consider the following cases:

\medskip
\noindent{\it\underline{Case 1}}: $r=1$

\medskip
\noindent Let $V=(v_1,\ldots,v_n)\in\R^n$.  Then, by Proposition \ref{prop:orth-coherence} and the fact that $\mu(V)\le\mu_0$, we have:
$$ \mu(V)=n\max_{1\le i\le n} v_i^2 \le \mu_0 < \frac{n}{k} $$
It follows that $v_i^2<1/k$ for $i=1,\ldots,n$.  Since $\sum_{i=1}^nv_i^2=1$, we conclude that $V$ must have at least $k+1$ non--zero entries.  It follows that $V^T$ is a column $s$--stable rank one matrix for some $s\geq k$, since the removal of any $k$ columns from $V^T$ does not change its rank.

\medskip
\noindent{\it\underline{Case 2}}: $r\ge2$

\medskip
\noindent Suppose that $V^T$ is only an $l$--stable rank $r$ matrix for some $0\le l\le k-1$.  Then, by definition, there exist $l+1$ columns whose removal will result in a rank $r-1$ sub--matrix of $V^T$.  Without loss of generality, suppose that those $l+1$ columns are the last $l+1$ columns of $V^T$.  Then, we may write $V^T=[RQ \,\,\, N]$, where $R \in \R^{r \times (r-1)}, Q \in \R^{(r-1) \times (n-l-1)}, N \in \R^{r \times (l+1)}$, and $Q$ has orthonormal rows.  Since $V^T$ has orthonormal rows, we have:
$$ I_r=V^T V=RR^T+NN^T = [R\,\,\,N][R\,\,\,N]^T $$
which means that the matrix $[R\,\,\,N]\in\R^{r\times(r+l)}$ also has orthonormal rows.  In particular, we have $\|R\|_F^2\leq r-1$ (here, $\|\cdot \|_F$ is the Frobenius norm).  Moreover, since $\|[R\,\,\,N]\|_F^2=r$, we have:
\begin{equation} \label{eq:f-norm}
   \|N\|_F^2= \|[R\,\,\,N]\|_F^2 - \|R\|_F^2 \geq 1
\end{equation}
On the other hand, observe that:
$$ \|N\|_F^2 \le \frac{lr\mu_0}{n} < \frac{l}{k} < 1 $$
which contradicts (\ref{eq:f-norm}).  Thus, we conclude that $V^T$ (and hence $A$) is a column $s$--stable rank $r$ matrix for some $s\ge k$.

The statement about the row stability of $A$ can be established by considering $A^T=V\Sigma U^T$ and following the above argument.
\end{proof}

\medskip
\noindent One of the consequences of Theorem \ref{thm:stab-coher} is that if both the factors $U$ and $V$ in the SVD of $A$ have small coherence relative to $\min\{m,n\}/r$ (which is the case of interest in the work \cite{CR09,KMO09,CT09}), then $A$ has high row and column stability.  Now, a natural question arises whether the converse also holds.  Curiously, as the following proposition shows, the answer is no.
\begin{prop} \label{prop:stab-not-coher}
   Let $k\in\{1,\ldots,n-1\}$ be arbitrary.  Then, there exist $n\times n$ rank one matrices $A$ that are both row and column $k$--stable, and yet the corresponding SVDs $A=\sigma uv^T$ satisfy $\min\{\mu(u),\mu(v)\}=\Theta(n)$.
\end{prop}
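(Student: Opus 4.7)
The plan is to produce an explicit family of rank one matrices, parametrized by a small $\epsilon>0$, whose singular vectors each have support of size exactly $k+1$ but with almost all of their mass concentrated on a single coordinate. The first coordinate being close to $1$ forces the coherence of each singular vector to be $\Theta(n)$, while the existence of $k+1$ non--zero coordinates will be exactly what is needed to guarantee $k$--stability.

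More precisely, first I would record the following elementary characterization: if $A=\sigma uv^T$ is a rank one matrix with $u,v$ unit vectors, then the $j$--th column of $A$ is non--zero iff $v_j\neq0$, and the $i$--th row is non--zero iff $u_i\neq0$. Consequently, $A$ is column $s$--stable, where $s=|\mathrm{supp}(v)|-1$, because any $|\mathrm{supp}(v)|-1$--element subset of the columns still contains at least one non--zero column (keeping the rank at $1$), whereas removing all of $\mathrm{supp}(v)$ yields the zero matrix (rank $0=r-1$). The analogous statement holds for row stability in terms of $\mathrm{supp}(u)$.

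Next, fix a small parameter $\epsilon\in\bigl(0,1/\sqrt{k}\bigr)$ and define
\[
u=v=\Bigl(\sqrt{1-k\epsilon^2},\,\underbrace{\epsilon,\ldots,\epsilon}_{k\text{ copies}},\,0,\ldots,0\Bigr)\in\R^n,
\qquad A=uv^T.
\]
Both $u$ and $v$ are unit vectors with exactly $k+1$ non--zero entries, so by the preceding paragraph $A$ is both row and column $k$--stable. Moreover, Proposition \ref{prop:orth-coherence} gives
\[
\mu(u)=\mu(v)=n\max_{1\le i\le n} v_i^2 = n\bigl(1-k\epsilon^2\bigr),
\]
so choosing, say, $\epsilon = 1/\sqrt{2k}$ yields $\mu(u)=\mu(v)=n/2=\Theta(n)$. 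Since trivially $\mu(u),\mu(v)\le n$, this gives $\min\{\mu(u),\mu(v)\}=\Theta(n)$, as required.

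There is essentially no obstacle here: the proposition is really a structural observation showing that column/row $k$--stability of a rank one matrix depends only on the \emph{support sizes} of its singular vectors, whereas coherence depends on their \emph{$\ell_\infty$ mass}, and these two quantities are independent. The only small point to verify carefully is the ``some $(n-k-1)$--column sub--matrix has rank $r-1$'' clause in the definition of $k$--stability, which in our rank one setting amounts to checking that removing the $k+1$ non--zero columns produces a zero matrix, which it manifestly does.
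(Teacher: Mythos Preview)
Your proof is correct and follows essentially the same approach as the paper: construct a unit vector with exactly $k+1$ non--zero entries, almost all mass on the first coordinate, and take $A=uu^T$. The only cosmetic differences are your parametrization of the small entries (you use $\epsilon$ directly with $\epsilon<1/\sqrt{k}$, the paper uses $\sqrt{\epsilon/k}$ with $\epsilon<1/2$) and your direct verification of $k$--stability via the support characterization, where the paper instead invokes Proposition~\ref{prop:invar}; neither difference is substantive.
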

\begin{proof}
   Let $\epsilon\in(0,1/2)$ be fixed.  Define $u\in\R^n$ as follows:
   $$ u_i = \left\{
   \begin{array}{c@{\quad}l}
      \sqrt{1-\epsilon} & \mbox{if } i=1 \\
      \noalign{\smallskip}
      \sqrt{\epsilon/k} & \mbox{if } 2 \le i\le k+1 \\
      \noalign{\smallskip}
      0 & \mbox{otherwise}
   \end{array}
   \right.
   $$
By construction, we have $\|u\|_2^2=1$.  Now, consider the rank one matrix $A=uu^T$.  Since $u^T$ has $k+1$ non--zero entries, it is column $k$--stable.  Thus, by Proposition \ref{prop:invar} and the symmetry of $A$, we conclude that $A$ is both row and column $k$--stable.  On the other hand, using Proposition \ref{prop:orth-coherence}, we compute:
$$ \mu(u) = n\max_{1\le i\le n} u_i^2 = (1-\epsilon)n $$
This completes the proof.
\end{proof}

\medskip
\noindent As can be seen in the proof of Proposition \ref{prop:stab-not-coher}, the coherence of a matrix can be very sensitive to the actual values in its entries.  This can be partly attributed to the fact that coherence is an analytic notion.  By contrast, the notion of $k$--stability is more combinatorial in nature and hence is not as sensitive to those values.

Theorem \ref{thm:stab-coher} and Proposition \ref{prop:stab-not-coher} together show that the notion of $k$--stability can be regarded as a generalization of the notions of coherence defined in \cite{CR09,CT09}.  In particular, various constructions of low--coherence matrices proposed in \cite{CR09,CT09} can be transferred to the high--stable case.  However, it would be nice to have some more direct constructions of high--stable matrices.  In the next section, we will show that matrices with high stability are actually quite ubiquitous.

\subsection{Ubiquity of $k$--Stable Matrices}
Let $A\in\R^{r\times n}$ be a matrix with full row rank (in particular, we have $r\le n$).  Then, it is clear that the maximum stability of $A$ is $n-r$, and that the maximum can be attained.  It turns out that such a situation is typical.  More precisely, we have the following:
\begin{thm} \label{thm:stab-typ}
   Let $r,n$ be integers with $n\ge r\ge1$.  Then, the set $\mathcal{S}\equiv\R^{r\times n}\backslash\mathcal{M}^{r\times n}(n-r,r)$ has Lebesgue measure zero when considered as a subset of $\R^{rn}$.
\end{thm}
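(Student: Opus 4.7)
The plan is to characterize $\mathcal{S}$ as a finite union of zero loci of non-trivial polynomials in the entries of $A$, and then invoke the standard fact that the zero set of a non-trivial polynomial in $\R^N$ has Lebesgue measure zero.

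First I would dispense with the second clause in the definition of $(n-r)$-stability. If every $r\times r$ sub-matrix of $A$ has rank $r$, then $A$ has rank $r$, and any $r\times (r-1)$ sub-matrix (obtained by deleting $n-r+1$ columns) must have rank exactly $r-1$: its rank is at most $r-1$ by size, and adjoining any of the deleted columns yields an $r\times r$ sub-matrix of full rank, which forces the rank of the $r\times(r-1)$ sub-matrix to be at least $r-1$. Consequently,
\[
\mathcal{S} = \bigl\{A\in\R^{r\times n} : \text{some } r\times r \text{ sub-matrix of } A \text{ is singular}\bigr\}.
\]

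Next, for each subset $I\subseteq\{1,\ldots,n\}$ with $|I|=r$, let $A_I\in\R^{r\times r}$ denote the sub-matrix of $A$ formed by the columns indexed by $I$, and let $p_I(A)\equiv\det(A_I)$. Then $p_I$ is a polynomial in the $rn$ entries of $A$, and it is not identically zero, since choosing $A_I$ to be the identity matrix gives $p_I=1$. Thus $\mathcal{S}=\bigcup_{|I|=r}\{A:p_I(A)=0\}$ is a finite union of zero loci of non-trivial polynomials.

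Finally, I would invoke the standard fact that the zero set of a non-trivial polynomial $p$ on $\R^N$ has Lebesgue measure zero; this follows by induction on $N$, writing $p$ as a polynomial in one variable whose coefficients are polynomials in the remaining $N-1$ variables, noting that a univariate non-trivial polynomial has only finitely many zeros, and applying Fubini's theorem. Since $\mathcal{S}$ is a finite union of such measure-zero sets, it has Lebesgue measure zero, as required. There is no real obstacle here: the only point one must be careful about is verifying that the $p_I$'s are non-trivial and that the second defining condition of $(n-r)$-stability is automatic, both of which are immediate.
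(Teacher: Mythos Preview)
Your proof is correct and follows essentially the same approach as the paper: characterize $\mathcal{S}$ as the union of the zero loci of the finitely many $r\times r$ minors and invoke the measure-zero property of non-trivial polynomial zero sets. If anything, you are slightly more careful than the paper, which simply asserts that ``if $A$ is not $(n-r)$-stable then some $r\times r$ sub-matrix is singular'' without explicitly checking (as you do) that the second clause in the definition of $(n-r)$-stability is automatic once all $r\times r$ minors are nonzero.
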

The proof of Theorem \ref{thm:stab-typ} relies on the following well--known result:
\begin{prop} \label{prop:poly-soln-meas}
   Let $f:\R^l\limto\R$ be a polynomial function that is not identically equal to zero.  Then, the solution set:
$$ f^{-1}(0)\equiv\{x\in\R^l:f(x)=0\} $$
has Lebesgue measure zero.
\end{prop}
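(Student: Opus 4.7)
The plan is to prove this by induction on the number of variables $l$, using Fubini's theorem for the inductive step. The key observation is that the zero set of a nonzero univariate polynomial is finite, and that one can reduce a multivariate polynomial to a univariate one by fixing all but one variable, provided one avoids the ``bad'' set where the leading coefficient vanishes.

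For the base case $l=1$, a nonzero polynomial of degree $d$ has at most $d$ real roots by the fundamental theorem of algebra, so $f^{-1}(0)$ is a finite subset of $\R$ and has Lebesgue measure zero.

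For the inductive step, assume the result holds in $\R^{l-1}$ and let $f\in\R[x_1,\ldots,x_l]$ be a nonzero polynomial. I would write
$$ f(x_1,\ldots,x_l) = \sum_{j=0}^{d} p_j(x_1,\ldots,x_{l-1})\, x_l^j, $$
where $d$ is chosen so that $p_d\in\R[x_1,\ldots,x_{l-1}]$ is not identically zero (such a $d$ exists because $f\not\equiv 0$). By the induction hypothesis applied to $p_d$, the set
$$ E = \{ y\in\R^{l-1} : p_d(y) = 0 \} $$
has $(l-1)$-dimensional Lebesgue measure zero, so $E\times\R$ has $l$-dimensional measure zero. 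For any $y\in\R^{l-1}\setminus E$, the map $x_l\mapsto f(y,x_l)$ is a nonzero univariate polynomial of degree $d$, so by the base case its zero set in $\R$ is finite and hence has one-dimensional measure zero. Writing
$$ f^{-1}(0) \;\subseteq\; (E\times\R) \;\cup\; \bigl(f^{-1}(0)\cap ((\R^{l-1}\setminus E)\times\R)\bigr), $$
I would apply Tonelli's theorem to the indicator function of the second set: the inner $x_l$-integral vanishes for every $y\in\R^{l-1}\setminus E$, so the full $l$-dimensional integral is zero. Both pieces have measure zero, which completes the induction. Note that $f^{-1}(0)$ is closed and hence Borel measurable, so there are no measurability concerns in the Tonelli step.

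The main (mild) obstacle is simply bookkeeping: one must ensure that the leading coefficient $p_d$ chosen in the decomposition is genuinely nonzero as a polynomial (not just nonzero for generic values), and that the Fubini/Tonelli setup correctly separates the ``bad slice'' $E\times\R$ from the ``good slices'' where the univariate reduction applies. Everything else is routine.
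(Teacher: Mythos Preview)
Your argument is correct and is in fact the standard proof of this fact: induction on $l$, with the base case handled by the finiteness of the roots of a univariate polynomial and the inductive step via Tonelli's theorem applied slice-by-slice after isolating the zero set of the leading coefficient. The decomposition $f^{-1}(0)\subseteq (E\times\R)\cup\bigl(f^{-1}(0)\cap((\R^{l-1}\setminus E)\times\R)\bigr)$ is exactly the right one, and your remark about measurability of $f^{-1}(0)$ (as a closed set) is the only subtlety worth flagging.

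The paper, however, does not supply its own proof of this proposition at all: it simply cites an external reference (Okamoto~\cite{O73}) and moves on. So there is no ``paper's approach'' to compare against; you have filled in a gap that the authors chose to outsource. Your write-up is self-contained and would serve as a perfectly good replacement for the citation.
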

A proof of Proposition \ref{prop:poly-soln-meas} can be found in \cite{O73}.

\medskip
\noindent{\bf Proof of Theorem \ref{thm:stab-typ}}{\quad}Suppose that $A\in\R^{r\times n}$ is not $(n-r)$--stable.  Then, one of the $r\times r$ sub--matrices of $A$ must be singular, or equivalently, has determinant zero.  Since the determinant of a square matrix is a polynomial function of its entries, and since there are only finitely many $r\times r$ sub--matrices of $A$, it follows from Proposition \ref{prop:poly-soln-meas} that $\mathcal{S}$ has Lebesgue measure zero. \endproof

\medskip
\noindent Thus, by taking a generic $r\times n$ matrix $R$ and an arbitrary $m\times r$ matrix $Q$ whose columns are linearly independent, we may conclude from Proposition \ref{prop:invar} and Theorem \ref{thm:stab-typ} that the $m\times n$ matrix $A=QR$ has rank $r$ and is column $(n-r)$--stable.

In \cite{CR09} the authors considered an alternative construction of rank $r$ matrices using the so--called {\it random orthogonal model}.  In that model, one constructs an $m\times n$ matrix $A$ via $A=U\Sigma V^T$, where $V\in\R^{n\times n}$ is a random orthogonal matrix drawn according to the Haar measure on the orthogonal group $\mathcal{O}(n)$, $U\in\R^{m\times m}$ is an arbitrary orthogonal matrix, and $\Sigma\in\R^{m\times n}$ is an arbitrary matrix with the partition:
$$ \Sigma=\left[\begin{array}{cc} \Sigma_r & \bz \\ \noalign{\smallskip} \bz & \bz \end{array} \right] \,:\, \Sigma_r\in\R^{r\times r} \mbox{ diagonal},\,\,(\Sigma_r)_{ii}\not=0\,\mbox{ for }i=1,\ldots,r $$
By construction, the matrix $A$ has rank $r$.  Now, we claim that $A$ is column $(n-r)$--stable with probability one (with respect to the Haar measure on $\mathcal{O}(n)$).  To see this, observe that $A=U_r\Sigma_rV_r^T$, where:
$$ U=\left[\begin{array}{cc} U_r & \bar{U}_r\end{array}\right],\quad V=\left[\begin{array}{cc} V_r & \bar{V}_r\end{array}\right] $$
and $U_r\in\R^{m\times r}$, $V_r\in\R^{n\times r}$.  Since $U_r\Sigma_r$ has linearly independent columns, by Proposition \ref{prop:invar}, it suffices to show that $V_r^T$ is column $(n-r)$--stable with probability one.  Towards that end, it suffices to show that with probability one, every $r\times r$ sub--matrix of $V_r^T$ has non--zero determinant.  It turns out that the last statement is well--known; see, e.g., \cite[Lemma 2.2]{KW85}.  Thus, we have proven the following:
\begin{thm} \label{thm:rand-orth-stab}
   Let $A$ be a rank $r$ $m\times n$ matrix generated according to the random orthogonal model.  Then, $A$ is column $(n-r)$--stable with probability one (with respect to the Haar measure on $\mathcal{O}(n)$).
\end{thm}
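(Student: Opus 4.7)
My plan is to strip away the deterministic factors of $A$ and reduce the statement to a property of the Haar--random factor $V$ alone. Using the partitions $U=[U_r\,\,\bar{U}_r]$ and $V=[V_r\,\,\bar{V}_r]$ with $U_r\in\R^{m\times r}$ and $V_r\in\R^{n\times r}$, the block structure of $\Sigma$ immediately gives $A=U_r\Sigma_r V_r^T$. The matrix $U_r\Sigma_r\in\R^{m\times r}$ has linearly independent columns, because $\Sigma_r$ is a nonsingular diagonal matrix and the columns of $U_r$ are orthonormal. Proposition \ref{prop:invar} therefore lets me replace $A$ by $V_r^T$: it suffices to show that $V_r^T$ is column $(n-r)$--stable with probability one.

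Next I would translate this into a concrete statement about minors of $V_r$. Since $V_r$ has orthonormal columns, $V_r^T\in\R^{r\times n}$ has rank exactly $r$, so its maximum possible stability is $n-r$, and this maximum is attained precisely when every $r\times r$ sub--matrix of $V_r^T$ is nonsingular --- equivalently, when every $r$--subset of the rows of $V_r$ is linearly independent. Thus the theorem reduces to the claim that, almost surely under the Haar measure on $\mathcal{O}(n)$, all $\binom{n}{r}$ of the $r\times r$ minors of $V_r$ are nonzero.

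For any fixed $r$--subset $S\subseteq\{1,\ldots,n\}$, the corresponding minor is a polynomial in the entries of $V$ restricted to $\mathcal{O}(n)$, and this polynomial is not identically zero on $\mathcal{O}(n)$: a suitable permutation matrix already produces a minor equal to $\pm 1$. A union bound over the finitely many choices of $S$ then finishes the argument, provided the Haar measure of the zero set of such a polynomial is zero. That measure--theoretic step is the part I expect to require the most care: unlike Proposition \ref{prop:poly-soln-meas}, which concerns Lebesgue measure on $\R^l$, the underlying space here is the compact Lie group $\mathcal{O}(n)$. I would handle it either by citing \cite[Lemma 2.2]{KW85} directly, or via a local--chart argument exploiting the fact that the Haar measure on $\mathcal{O}(n)$ is absolutely continuous with respect to the Riemannian volume, which reduces the claim to a real--analytic version of Proposition \ref{prop:poly-soln-meas} in Euclidean space.
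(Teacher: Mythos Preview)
Your proposal is correct and follows essentially the same route as the paper: factor $A=U_r\Sigma_r V_r^T$, invoke Proposition~\ref{prop:invar} to reduce to $V_r^T$, and then argue that every $r\times r$ minor of $V_r^T$ is nonzero almost surely under the Haar measure. The only difference is that the paper dispatches the last step in one line by citing \cite[Lemma 2.2]{KW85}, whereas you additionally sketch how one might prove it directly via a union bound and a local--chart/absolute--continuity argument; this extra detail is sound but not needed beyond the citation.
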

In \cite{CR09} it is shown that the coherence of a rank $r$ $n\times n$ matrix generated according to the random orthogonal model is bounded by $O(\bar{r}/r)$ with probability $1-o(1)$, where $\bar{r}=\max\{r,\log n\}$.  Using Theorem \ref{thm:stab-coher}, we see that such a matrix is column $(n/\bar{r})$--stable with probability $1-o(1)$.  This should be contrasted with the conclusion of Theorem \ref{thm:rand-orth-stab}, which is much stronger.

\section{The Randomized Basis Pursuit (RBP) Algorithm} \label{sec:RBP}
Let us now consider the algorithmic aspects of matrix reconstruction, particularly those that are related to the reconstruction of low--rank high--stable matrices.  As briefly discussed in the Introduction, if a reconstruction algorithm can only inspect a small number of entries, then it should somehow inspect those that contain the most information.  Of course, since there is no a priori information on the input matrix, every algorithm must at some point make a guess at which entries are important.  Currently, the best algorithms for the reconstruction problem all pursue an entry--wise uniform sampling strategy \cite{CR09,KMO09,CT09}.  Specifically, they all begin by sampling a uniformly random subset of the entries and inspecting the values in those entries.  Such a strategy will certainly perform well when the information that is crucial to the reconstruction is well--spread, but could also fail miserably when those information is highly concentrated.  As an illustration, consider the rank one $m\times n$ matrix $A$ from Example \ref{exam:first-row}, which has the form:
\begin{equation} \label{eq:exam1}
   A = \left[\begin{array}{cc} a^T \\ \noalign{\smallskip} \bz \end{array}\right]
\end{equation}
where $a\in\R^n$ has no zero component.  Clearly, there is no hope of reconstructing $A$ if we do not inspect all the entries in its first row.  However, if the entry--wise uniform sampling strategy is used, then the probability that $l$ randomly sampled entries of $A$ will include all the entries in the first row is bounded above by:
$$ \frac{\left(
\begin{array}{c}
 mn-n \\
  l-n \\
   \end{array}
   \right)
}{\left(
 \begin{array}{c}
 mn \\
   l \\
   \end{array}
   \right)}=\frac{l(l-1)\cdots (l-n+1)}{mn(mn-1)\cdots (mn-n+1)}\leq \left(\frac{l}{mn}\right)^n
$$
In particular, {\it no} algorithm that uses the entry--wise uniform sampling strategy will be able to reconstruct $A$ with probability larger than $e^{-1}$ even after sampling $l=mn-m=\Theta(mn)$ of its entries!

The above example shows that the entry--wise uniform sampling strategy may miss the critical structure of a matrix if that structure is localized.  On the other hand, observe that the matrix $A$ in the above example can be exactly reconstructed once we inspect its first row and {\it any} of its columns.  In general, one may think of a low--rank matrix as being largely determined by a small number of its rows and columns.  Such an intuition motivates the following matrix reconstruction algorithm.  Note that the algorithm requires the knowledge of the rank of the input matrix.  However, as we shall see in Section \ref{subsec:rank-free}, such an assumption can be removed if we can bound the stability of the input matrix.

\medskip
\noindent{\underline{\sc Randomized Basis Pursuit (RBP) Algorithm}}

\smallskip
\noindent{\,\,\,}\underline{Input}: A rank $r$ $m\times n$ matrix $A$, where $r$ is known.
\begin{enumerate}
   \item \underline{Initialization}: Initialize $S\leftarrow\emptyset$ and $T\leftarrow\{1,\ldots,n\}$.  The set $S$ will be used to store the column indices that correspond to the recovered basis columns of $A$.

   \item \underline{Basis Pursuit Step}:
   \begin{enumerate}
      \item If $T=\emptyset$, then stop.  All the columns of $A$ have been examined, and hence $A$ can be reconstructed directly.

      \item Otherwise, let $j$ be drawn from $T$ uniformly at random, and let $u_j\in\R^m$ be the corresponding column of $A$.  Examine all the entries in $u_j$.  If $u_j$ is spanned by the columns whose indices belong to $S$, then repeat Step 2b.  Otherwise, update:
      $$ S\leftarrow S\cup\{j\},\quad T\leftarrow T\backslash\{j\} $$
      since $u_j$ is a new basis column.  Now, if $|S|=r$, then proceed to Step 3.  Otherwise, repeat Step 2.
   \end{enumerate}

   \item \underline{Row Identification}: Let $A_S$ be the $m\times r$ sub--matrix of $A$ whose columns are those indexed by $S$.  Find $r$ linearly independent rows in $A_S$.  Let $\bar{S}$ be the corresponding set of row indices, and let $A_{\bar{S},S}$ be the corresponding $r\times r$ matrix.

   \item \underline{Reconstruction}: Examine all the entries in the $i$--th row of $A$ for all $i\in\bar{S}$.  Now, the $j$--th column of $A$ (where $j\not\in S$) can be expressed as a linear combination of the basis columns indexed by $S$, where the coefficients are obtained by expressing the vector $(a_{ij})_{i\in\bar{S}}\in\R^{|\bar{S}|}$ as a linear combination of the columns of $A_{\bar{S},S}$.
\end{enumerate}
It is not hard to show that when the above algorithm terminates, it will produce an exact reconstruction of the input matrix.  To illustrate the flow of the algorithm, let us consider again the rank one matrix $A$ from Example \ref{exam:first-row} (see (\ref{eq:exam1})).  Since the first row of $A$ has no zero component, any column selected in Step 2b of the algorithm can be the basis column.  Now, suppose that $j$ is the index of the selected column.  After inspecting all the entries in the $j$--th column, the algorithm will identify the $1\times 1$ sub--matrix $A_{1j}$ in Step 3, since $A_{1j}$ is the only non--zero entry in the $j$--th column.  Consequently, the algorithm will examine all the entries in the first row of $A$ in Step 4, thus obtaining all the information that is necessary for the reconstruction of $A$.  Note that in this example, the total number of entries inspected by the algorithm is $m+n-1$, which is exactly equal to the information--theoretic minimum.

From the description of the RBP algorithm, we see that if the input matrix is of low rank but has many candidate basis columns, then the basis pursuit step will terminate sooner, and hence the number of entries inspected by the algorithm will also be lower.  This is indeed the case when the input matrix has high stability (recall that the matrix from Example \ref{exam:first-row} is $(n-1)$--stable).  Before we proceed with a formal analysis, let us remark that some additional speed up of the above algorithm is possible.  For instance, in Step 2b, once we determine that a column lies in the span of those indexed by $S$, then we do not need to consider it anymore and hence its index can be removed from $T$.  However, in order to facilitate the analysis, we shall focus on the version presented above.

\subsection{Sampling Complexity of the RBP Algorithm}
In this section we analyze the sampling complexity of the RBP algorithm.  Specifically, our goal is to prove the following:
\begin{thm} \label{thm:RBP-sample}
   Suppose that the input rank $r$ $m\times n$ matrix $A$ to the RBP algorithm is $k$--stable for some $k\in\{0,1,\ldots,n-r\}$, i.e.~$A\in\mathcal{M}^{m\times n}(k,r)$.  Let $\delta\in(0,1)$ be given.  Then, with probability at least $1-r\delta$, the RBP algorithm will terminate with an exact reconstruction of $A$, and the total number of entries inspected by the algorithm is bounded above by $nr+(k+1)^{-1}mnr(1+\ln(1/\delta))$.
\end{thm}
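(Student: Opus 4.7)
My plan splits the analysis into three stages: (i) deterministic correctness of the RBP algorithm whenever it halts; (ii) using $k$--stability to lower-bound the per-draw success probability in the Basis Pursuit Step; and (iii) a geometric-tail plus union-bound argument to count the entries inspected.

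\textbf{Correctness.} Upon exiting Step 2, the set $S$ indexes $r$ linearly independent columns of $A$, so $A_S$ spans the column space of $A$ and has $r$ linearly independent rows; hence Step 3 produces $\bar S$ of size $r$ with $A_{\bar S,S}$ invertible. For every $j\notin S$ the column $a_j$ lies in $\mathrm{span}(A_S)$ (otherwise Step 2 would have added $j$ to $S$), so $a_j=A_S\lambda_j$ with $\lambda_j=A_{\bar S,S}^{-1}(a_j)_{\bar S}$; thus all entries of $A$ are recovered exactly from the inspected data. This part is deterministic and independent of the random choices.

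\textbf{The combinatorial claim.} I will show that at any moment in Step 2 with $|S|=i<r$, at least $k+1$ columns of $A$ lie outside $\mathrm{span}(A_S)$. Suppose only $\ell\le k$ columns lie outside; then deleting these $\ell$ columns together with $k-\ell$ further arbitrary columns produces an $m\times(n-k)$ sub--matrix whose columns all lie in a subspace of dimension $i<r$, hence of rank $<r$; this contradicts $k$--stability. Since the columns of $A_S$ themselves lie in $\mathrm{span}(A_S)$, all $\ge k+1$ ``outside'' columns belong to $T$, so the probability that a uniform draw from $T$ in Step 2b succeeds (i.e.~enlarges $S$) is at least $(k+1)/(n-i)\ge(k+1)/n$.

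\textbf{Geometric tail and total count.} Let $X_i$ be the number of Step 2b draws needed to go from $|S|=i$ to $|S|=i+1$; then $X_i$ is stochastically dominated by a geometric variable with parameter $(k+1)/n$. Using $-\ln(1-p)\ge p$, a standard calculation yields $\Pr\bigl[X_i>1+n(k+1)^{-1}\ln(1/\delta)\bigr]\le\delta$. A union bound over $i=0,1,\dots,r-1$ gives
\[
\sum_{i=0}^{r-1}X_i\ \le\ r+rn(k+1)^{-1}\ln(1/\delta)
\]
with probability at least $1-r\delta$. Each Step 2b draw inspects $m$ entries (one full column); Step 3 needs no new entries; Step 4 inspects at most $rn$ additional entries (the $r$ rows indexed by $\bar S$). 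Using $k+1\le n$ to absorb $mr\le mnr/(k+1)$, the total is bounded by
\[
rn+mr+mrn(k+1)^{-1}\ln(1/\delta)\ \le\ nr+(k+1)^{-1}mnr\bigl(1+\ln(1/\delta)\bigr),
\]
as claimed. The main obstacle is the combinatorial claim connecting $k$--stability to the number of ``productive'' columns; once established, everything else is a routine coupon--collector--style computation, with only minor care needed to absorb the additive ``$1$'' from the per--phase tail bound into the $(1+\ln(1/\delta))$ factor via the inequality $mr\le mnr/(k+1)$.
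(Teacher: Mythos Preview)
Your proof is correct and follows essentially the same route as the paper: divide Step~2 into epochs indexed by $|S|$, use $k$--stability to lower-bound the per-draw success probability by $(k+1)/(n-i)$ (you relax this to $(k+1)/n$, which is harmless), apply a geometric tail bound per epoch, and finish with a union bound over the $r$ epochs. Your explicit justification of the combinatorial claim---that at least $k+1$ columns lie outside $\mathrm{span}(A_S)$ whenever $|S|<r$---is in fact more careful than the paper, which simply asserts $p_i\ge(k+1)/(n-i)$ without argument; the only other cosmetic difference is that you absorb the additive $mr$ term via $k+1\le n$, whereas the paper sums $\sum_i(n-i)/(k+1)\le nr/(k+1)$ directly.
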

The following simple estimate will be used in the proof of Theorem \ref{thm:RBP-sample}:
\begin{prop} \label{prop:geom-ld}
   Let $X$ be a geometric random variable with parameter $p\in(0,1)$.  Then, for any $\delta>0$, we have:
   $$ \Pr\left(X>\frac{1+\delta}{p}\right) \le e^{-\delta} $$
\end{prop}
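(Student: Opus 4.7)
The plan is to bound the tail probability directly from the explicit form of the geometric distribution. Under the standard convention where $X$ counts the number of trials until the first success, $\Pr(X > k) = (1-p)^k$ for every non-negative integer $k$, and hence for an arbitrary real $t \geq 0$ we have $\Pr(X > t) = (1-p)^{\lfloor t \rfloor}$. Since $1-p \in (0,1)$ and $\lfloor t \rfloor \geq t - 1$, this immediately yields $\Pr(X > t) \leq (1-p)^{t-1}$.

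Next I would apply the elementary inequality $1 - p \leq e^{-p}$, obtaining $\Pr(X > t) \leq e^{-p(t-1)}$. Substituting $t = (1+\delta)/p$ then gives $\Pr(X > (1+\delta)/p) \leq e^{-(1+\delta) + p} = e^{-\delta} \cdot e^{-(1-p)}$, and since $p \in (0,1)$ forces $e^{-(1-p)} \leq 1$, the desired bound $\Pr(X > (1+\delta)/p) \leq e^{-\delta}$ follows at once.

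I do not anticipate any real obstacle here: this is a completely standard Chernoff-type tail estimate for the geometric distribution, and the two inequalities invoked (the monotonicity used in $(1-p)^{\lfloor t \rfloor} \leq (1-p)^{t-1}$ and $1-p \leq e^{-p}$) are both elementary. The only mild subtlety, that $(1+\delta)/p$ need not be an integer, is absorbed by the crude rounding step $\lfloor t \rfloor \geq t - 1$, with the leftover factor $e^{-(1-p)} \leq 1$ providing precisely enough slack to finish.
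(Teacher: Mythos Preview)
Your proposal is correct and follows essentially the same route as the paper: the paper sums the tail series to obtain $(1-p)^{\lceil(1+\delta)/p\rceil-1}$ and then bounds this by $(1-p)^{\delta/p}\le e^{-\delta}$, while you start directly from the closed-form tail $\Pr(X>t)=(1-p)^{\lfloor t\rfloor}$ and apply the same two elementary inequalities $\lfloor t\rfloor\ge t-1$ and $1-p\le e^{-p}$. The only cosmetic difference is the order in which the rounding and the exponential bound are applied, leaving you with the harmless extra factor $e^{-(1-p)}\le 1$.
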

\begin{proof}
   We compute:
   \begin{eqnarray*}
      \Pr\left(X>\frac{1+\delta}{p}\right) &\le& \sum_{j=\lceil(1+\delta)/p\rceil}^{\infty} p(1-p)^{j-1} \\
      \noalign{\medskip}
      &=& p\cdot(1-p)^{\lceil(1+\delta)/p\rceil-1}\cdot\sum_{j=0}^{\infty}(1-p)^j \\
      \noalign{\medskip}
      &\le& (1-p)^{\delta/p} \\
      \noalign{\medskip}
      &\le& e^{-\delta}
   \end{eqnarray*}
This completes the proof.
\end{proof}

\medskip
\noindent{\bf Proof of Theorem \ref{thm:RBP-sample}}{\quad} Observe that Step 2 of the RBP algorithm is the only place where randomization is used, and that once Step 2 is completed, the algorithm will always terminate with an exact reconstruction of $A$.  Thus, it suffices to obtain a high probability bound on the number of times Step 2b is being executed throughout the entire course of the algorithm.  Towards that end, let us divide the execution of Step 2 into epochs, where the $i$--th epoch (for $i=0,1,\ldots,r-1$) begins at the iteration where $|S|=i$ for the first time and ends at the iteration just before the one where $|S|=i+1$.  Let $p_i$ be the probability that the column selected in an iteration of the $i$--th epoch is a basis column.  Note that $p_i$ is a random variable that depends on which $i$ basis columns are selected in the previous $i$ epochs.  However, since the input matrix is assumed to be $k$--stable, we have:
$$ p_i \ge \frac{k+1}{n-i} \qquad\mbox{for } i=0,1,\ldots,r-1 $$
Now, let $Y_i$ be the number of times Step 2b is being executed in the $i$--th epoch.  Then, $Y_i$ is a geometric random variable with parameter $p_i$, and the number of times Step 2b is being executed throughout the entire course of the algorithm is given by:
$$ Y = \sum_{i=0}^{r-1}Y_i $$
By Proposition \ref{prop:geom-ld}, we have:
$$ \Pr\left(Y>\sum_{i=0}^{r-1}\frac{1+\ln(1/\delta)}{p_i}\right) \le \sum_{i=0}^{r-1} \Pr\left(Y_i > \frac{1+\ln(1/\delta)}{p_i}\right) \le r\delta $$
It follows that with probability at least $1-r\delta$, the total number of times Step 2b is being executed is bounded above by:
   \begin{eqnarray*}
      \sum_{i=0}^{r-1}\frac{1+\ln(1/\delta)}{p_i} &\le& \left(1+\ln\frac{1}{\delta}\right)\sum_{i=0}^{r-1}\frac{n-i}{k+1} \\
      \noalign{\medskip}
      &=&  \frac{r(2n-r+1)}{2(k+1)}\left(1+\ln\frac{1}{\delta}\right) \\
      \noalign{\medskip}
      &\le& \frac{nr}{k+1}\left(1+\ln\frac{1}{\delta}\right)
   \end{eqnarray*}
Note that the above quantity is also an upper bound on the number of {\it distinct} columns examined by the algorithm.  Thus, we see that with probability at least $1-r\delta$, the total number of entries inspected by the algorithm is bounded above by:
$$ nr+\frac{mnr}{k+1}\left(1+\ln\frac{1}{\delta}\right) $$
and the proof is completed. \endproof

\medskip
\noindent Upon combining the results of Theorem \ref{thm:rand-orth-stab} and Theorem \ref{thm:RBP-sample}, we obtain the following corollary, which significantly improves the result in \cite{CT09}:
\begin{cor}
   Let $A$ be a rank $r$ $m\times n$ matrix generated according to the random orthogonal model.  Then, with probability at least $1-O(n^{-3})$, the RBP algorithm will terminate with an exact reconstruction of $A$, and the total number of entries inspected by the algorithm is bounded above by $O(n+m\log n)$ when $r=O(1)$, and by $O(n\log n+m\log^2n)$ when $r=O(\log n)$.
\end{cor}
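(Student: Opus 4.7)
The plan is to combine the two theorems by plugging in the appropriate value of $k$ given by the random orthogonal model and then choosing $\delta$ to give the desired failure probability. By Theorem \ref{thm:rand-orth-stab}, a rank $r$ $m\times n$ matrix generated from the random orthogonal model is column $(n-r)$-stable with probability one (with respect to the Haar measure on $\mathcal{O}(n)$), so we may condition on this event and invoke Theorem \ref{thm:RBP-sample} with $k=n-r$.

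Next, I would select $\delta$ so that $r\delta = O(n^{-3})$, for example $\delta = n^{-3}/r$. This gives $1+\ln(1/\delta) = 1 + 3\ln n + \ln r = O(\log n)$, as long as $r = O(\mathrm{poly}(n))$, which certainly holds in both cases of interest. A union bound over the measure-zero failure event of Theorem \ref{thm:rand-orth-stab} and the failure event of Theorem \ref{thm:RBP-sample} then yields overall success probability $1 - O(n^{-3})$.

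Substituting $k = n - r$ into the bound of Theorem \ref{thm:RBP-sample}, the number of inspected entries is at most
\[
nr + \frac{mnr}{n-r+1}\bigl(1+\ln(1/\delta)\bigr) = nr + O\!\left(\frac{mnr\log n}{n-r+1}\right).
\]
For the $r=O(1)$ case, $n-r+1 = \Theta(n)$ and the expression simplifies to $O(n) + O(m\log n) = O(n+m\log n)$. For the $r=O(\log n)$ case, still $n-r+1 = \Theta(n)$ for sufficiently large $n$, so the first term becomes $O(n\log n)$ and the second becomes $O(m\log^2 n)$, giving $O(n\log n + m\log^2 n)$.

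There is no real obstacle here; the corollary is a direct specialization of the two preceding theorems. The only thing worth being careful about is making sure that the probability accounting correctly combines the deterministic (measure-one) stability conclusion of Theorem \ref{thm:rand-orth-stab} with the algorithm's internal randomness in Theorem \ref{thm:RBP-sample}, and that $n-r+1 = \Theta(n)$ under both rank regimes so that the factor $\frac{nr}{n-r+1}$ stays bounded by a constant (respectively $O(\log n)$).
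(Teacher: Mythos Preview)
Your proposal is correct and follows exactly the approach the paper takes: the corollary is stated immediately after the phrase ``Upon combining the results of Theorem~\ref{thm:rand-orth-stab} and Theorem~\ref{thm:RBP-sample}, we obtain the following corollary,'' with no further proof given. Your choice of $\delta$ and the arithmetic specializing the bound $nr + (k+1)^{-1}mnr(1+\ln(1/\delta))$ with $k=n-r$ to the two rank regimes are precisely what is needed to fill in the details.
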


\subsection{Implementation and Complexity Analysis of the RBP Algorithm} 
In this section we discuss some of the implementation details of the RBP algorithm and analyze its computational complexity.  Clearly, the initialization step can be done using $O(n)$ operations.  For the basis pursuit step, we need to determine whether a newly selected column is in the span of the current basis columns (i.e.~those indexed by $S$).  This can be achieved via a Gram--Schmidt type process.  Specifically, we maintain a set $\mathcal{U}$ of orthonormal vectors with the following property: during the $i$--th epoch (where $i=0,1,\ldots,r-1$), the set $\mathcal{U}$ will contain $i$ orthonormal vectors $w_1,\ldots,w_i\in\R^m$, whose span is equal to that of the columns indexed by $S$.  Now, suppose that the algorithm selects the column $v\in\R^m$.  To test whether $v\in\mbox{span}\{w_1,\ldots,w_i\}$, we first compute:
$$ \Pi_i(v) \equiv \sum_{l=1}^i (w_l^Tv)w_l \in \R^m $$
and then check whether $\Pi_i(v)=v$ (we set $\Pi_0(v)=\bz$).  If this is the case, then we have $v\in\mbox{span}\{w_1,\ldots,w_i\}$, whence we can proceed to select another column.  Otherwise, $v$ is a new basis column.  Thus, we add its index to the set $S$ and add the unit vector $(v-\Pi_i(v))/\|v-\Pi_i(v)\|_2$ to the set $\mathcal{U}$ before continuing to the next instruction.

To determine the time needed by the basis pursuit step, observe that during the $i$--th epoch, each selected column requires $O(im)$ operations.  Since $i\le r-1$, by Theorem \ref{thm:RBP-sample}, we conclude that with probability $1-r\delta$, the total number of operations executed in the basis pursuit step is bounded by $O((k+1)^{-1}mnr^2\log(1/\delta))$.

In the row identification step, we need to find $r$ linearly independent rows in the $m\times r$ matrix $A_S$.  This can be achieved by a Gram--Schmidt type process similar to the one outlined above, and the total number of operations required is bounded by $O(mr^2)$.

Finally, in order to carry out the reconstruction step, we can first compute the inverse of the non--singular $r\times r$ matrix $A_{\bar{S},S}$ using $O(r^3)$ operations.  Then, for each $j\not\in S$, we can express the vector $(a_{ij})_{i\in\bar{S}}\in\R^{|\bar{S}|}$ as a linear combination of the columns of $A_{\bar{S},S}$ using $O(r^2)$ operations.  Afterwards, the $j$--th column can be reconstructed using $O(mr)$ operations.  Since we need to reconstruct at most $n-1$ columns, it follows that the total number of operations required in the
reconstruction step is bounded by $O(r^3+nr^2+mnr)=O(mnr)$.

To summarize, we have the following:
\begin{thm} \label{thm:RBP-runtime}
   Suppose that the input rank $r$ $m\times n$ matrix $A$ to the RBP algorithm is $k$--stable for some $k\in\{0,1,\ldots,n-r\}$, i.e.~$A\in\mathcal{M}^{m\times n}(k,r)$.  Let $\delta\in(0,1)$ be given.  Then, with probability at least $1-r\delta$, the total number of operations performed by the RBP algorithm is bounded by $O((k+1)^{-1}mnr^2\log(1/\delta)+mnr)$.
\end{thm}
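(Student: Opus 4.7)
The plan is to account for the operation counts of the four steps of the algorithm separately and then sum, since only the basis pursuit step has a random execution time while the other steps are deterministic given the input. The initialization step is trivially $O(n)$ and will be absorbed.

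For the basis pursuit step, I would argue as follows. The algorithm maintains an orthonormal set $\{w_1,\ldots,w_i\}$ spanning the current basis columns. For each column $v\in\R^m$ drawn in the $i$-th epoch, computing the projection $\Pi_i(v)=\sum_{l=1}^i(w_l^Tv)w_l$ and the residual $v-\Pi_i(v)$ takes $O(im)$ operations, and (if $v$ is a new basis column) normalizing the residual is another $O(m)$. Since $i\le r-1$ throughout the algorithm, each column examined costs $O(rm)$ operations. Now I would invoke Theorem \ref{thm:RBP-sample}: with probability at least $1-r\delta$, the total number of executions of Step 2b is at most $O((k+1)^{-1}nr\log(1/\delta))$. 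Multiplying the two bounds gives an $O((k+1)^{-1}mnr^2\log(1/\delta))$ bound on the total cost of the basis pursuit step on this good event.

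For the row identification step, finding $r$ linearly independent rows in the $m\times r$ matrix $A_S$ by an analogous Gram--Schmidt-like process (now applied to row vectors in $\R^r$) takes $O(mr^2)$ operations, which is dominated by $O(mnr)$. For the reconstruction step, inverting the $r\times r$ matrix $A_{\bar{S},S}$ costs $O(r^3)$; then for each of the at most $n-1$ columns outside $S$, solving a linear system of size $r\times r$ costs $O(r^2)$, and reconstructing the column from its coefficient vector and the basis columns costs $O(mr)$. So this step runs in $O(r^3+nr^2+mnr)=O(mnr)$ operations. Combining these deterministic bounds with the high-probability bound on Step 2 yields the claimed $O((k+1)^{-1}mnr^2\log(1/\delta)+mnr)$ runtime on an event of probability at least $1-r\delta$.

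There is no real obstacle: the only subtlety is that the probabilistic bound only controls the number of column selections in Step 2, so I must verify that each individual selection has deterministic cost $O(rm)$, independent of random choices, in order to multiply the two bounds together. The rest is bookkeeping, and the $O(mnr)$ contribution from the reconstruction step is exactly what accounts for the additive term in the theorem.
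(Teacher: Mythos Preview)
Your proposal is correct and follows essentially the same approach as the paper: you decompose the runtime into the four steps, bound the per-column cost in Step~2 by $O(rm)$ via the Gram--Schmidt projection, multiply by the high-probability bound on the number of Step~2b executions coming from (the proof of) Theorem~\ref{thm:RBP-sample}, and then add the deterministic $O(mr^2)$ and $O(r^3+nr^2+mnr)=O(mnr)$ costs of Steps~3 and~4. The only cosmetic difference is that the paper phrases the per-column cost as $O(im)$ in the $i$-th epoch before bounding $i\le r-1$, whereas you pass directly to the uniform $O(rm)$ bound; the resulting estimate is identical.
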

Note that Theorem \ref{thm:RBP-runtime} only gives a {\it probabilistic} bound on the runtime.  However, the bound can be made deterministic by suitably modifying the RBP algorithm. Specifically, we can add a counter to keep track of the total number of times Step 2b is being executed.  Once that number exceeds $(k+1)^{-1}nr(1+\ln(1/\delta))$, we stop the algorithm and declare failure.  With such modification, the conclusion of Theorem \ref{thm:RBP-sample} still holds.  However, we now have a {\it deterministic} bound of $O((k+1)^{-1}mnr^2\log(1/\delta)+mnr)$ on the runtime.  We remark that such an idea can also be used to develop a ``rank--free'' version of the RBP algorithm, i.e.~one that does not require the knowledge of the rank of the input matrix.  We refer the reader to Section \ref{subsec:rank-free} for details.

The time bound obtained in Theorem \ref{thm:RBP-runtime} compares very favorably with that for the SDP--based algorithm of Cand\`{e}s and Recht \cite{CR09}.  Perhaps more importantly, our algorithm will produce an {\it exact} reconstruction of the input matrix in polynomial time.  By contrast, the Cand\`{e}s--Recht algorithm can only produce an {\it approximate} reconstruction in polynomial time.  This is due to the fact that SDPs can only be solved to a fixed level of accuracy in polynomial time.  We refer the reader to \cite{PK97} for further discussion on this issue.

\subsection{A Rank--Free RBP Algorithm} \label{subsec:rank-free}
Recall that the RBP algorithm introduced earlier assumes that the rank of the input matrix is known.  However, in practice, there is very little a priori information on the input matrix.  This raises the question of whether one can design a reconstruction algorithm that does not need the rank information.  It turns out that this is possible if we modify the RBP algorithm using the idea mentioned at the end of the last sub--section.  Specifically, we keep track of the number of attempts made by the algorithm to find the next basis column.  If that number reaches a pre--specified threshold, say $\Lambda$, then we exit the basis pursuit step and proceed to the row identification step of the algorithm.  The idea is that if $\Lambda$ is sufficiently large and the algorithm fails to find a new basis column after $\Lambda$ drawings, then it probably has found all the basis columns and hence the input matrix can be exactly reconstructed.  To formalize this idea, let us first give a precise description of the proposed algorithm.

\medskip
\noindent{\underline{\sc Rank--Free Randomized Basis Pursuit (RF--RBP) Algorithm}}

\smallskip
\noindent{\,\,\,}\underline{Input}: An $m\times n$ matrix $A$, stopping threshold $\Lambda\ge1$.
\begin{enumerate}
   \item \underline{Initialization}: Initialize $S\leftarrow\emptyset$, $T\leftarrow\{1,\ldots,n\}$ and $\kappa \leftarrow 0$.  The set $S$ will be used to store the column indices that correspond to the recovered basis columns of $A$.  The counter $\kappa$ will be used to keep track of the number of attempts made to find the next basis column.

   \item \underline{Basis Pursuit Step}:
   \begin{enumerate}
      \item If $T=\emptyset$, then stop.  All the columns of $A$ have been examined, and hence $A$ can be reconstructed directly.  Otherwise, reset $\kappa \leftarrow 0$ and proceed to Step 2b.

      \item Let $j$ be drawn from $T$ uniformly at random, and let $u_j\in\R^m$ be the corresponding column of $A$.  Examine all the entries in $u_j$.  

      \medskip
      \noindent If $u_j$ is spanned by the columns whose indices belong to $S$, then increment $\kappa \leftarrow \kappa+1$.  If $\kappa\ge\Lambda$, then proceed to Step 3.  Otherwise, repeat Step 2b.

      \medskip
      \noindent If $u_j$ is not spanned by the columns whose indices belong to $S$, then $u_j$ is a new basis column.  Update:
      $$ S\leftarrow S\cup\{j\},\quad T\leftarrow T\backslash\{j\} $$
      and repeat Step 2.
   \end{enumerate}

   \item \underline{Row Identification}: Let $A_S$ be the $m\times|S|$ sub--matrix of $A$ whose columns are those indexed by $S$.  Find $|S|$ linearly independent rows in $A_S$.  Let $\bar{S}$ be the corresponding set of row indices, and let $A_{\bar{S},S}$ be the corresponding $|S|\times|S|$ matrix.

   \item \underline{Reconstruction}: Examine all the entries in the $i$--th row of $A$ for all $i\in\bar{S}$.  Now, express the $j$--th column of $A$ (where $j\not\in S$) as a linear combination of the basis columns indexed by $S$, where the coefficients are obtained by expressing the vector $(a_{ij})_{i\in\bar{S}}\in\R^{|\bar{S}|}$ as a linear combination of the columns of $A_{\bar{S},S}$.
\end{enumerate}
Again, we are interested in the sampling complexity of the RF--RBP algorithm.  It turns out that if the input matrix is known to be $k$--stable for some $k\ge0$, then the sampling complexity of the RF--RBP algorithm is comparable to that of the RBP algorithm.  Specifically, we prove the following:
\begin{thm} \label{thm:RF-RBP-sample}
   Suppose that the input $m\times n$ matrix $A$ to the RF--RBP algorithm is $k$--stable for some $k\ge k_0$, i.e.~$A\in\mathcal{M}^{m\times n}(k)$.  Let $\delta\in(0,1)$ be given, and set:
\begin{equation} \label{eq:lambda}
   \Lambda = \log\left(\frac{\delta}{\min\{m,n\}}\right) \Big/ \log\left(1-\frac{k_0+1}{n}\right) 
\end{equation}
Then, with probability at least $1-\delta$, the RF--RBP algorithm will terminate with an exact reconstruction of $A$, and the total number of entries inspected by the algorithm is bounded above by $nr+m(r+1)\Lambda$, where $r=\mbox{rank}(A)$.
\end{thm}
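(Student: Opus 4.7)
The plan is to mirror the structure of the proof of Theorem~\ref{thm:RBP-sample}, with the added task of showing that the stopping rule ``exit to Step~3 once $\Lambda$ consecutive attempts fail'' does not trigger before all $r$ basis columns have been identified. I would partition the execution of Step~2 into \emph{epochs}, indexed by the value $i = |S|$: the $i$-th epoch starts when $|S|$ first becomes $i$ and ends either when $|S|$ becomes $i+1$ (a \emph{successful} epoch) or when the algorithm proceeds to Step~3 (a \emph{terminating} epoch).

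The first step is to reuse the $k$-stability argument from the proof of Theorem~\ref{thm:RBP-sample} to lower-bound the per-attempt success probability in any epoch with $i = |S| < r$. If fewer than $k+1$ columns of $A$ lay outside $\operatorname{span}(S)$, we could remove them to obtain an $m\times(n-k)$ submatrix of rank at most $i < r$, contradicting $A \in \mathcal{M}^{m\times n}(k)$; since the $i$ columns in $S$ are already in $\operatorname{span}(S)$ and have been removed from $T$, at least $k+1$ of the $|T| = n-i$ columns in $T$ are linearly independent from $\operatorname{span}(S)$. Thus each attempt in the $i$-th epoch (with $i < r$) succeeds with probability at least $(k+1)/(n-i) \ge (k_0+1)/n$, independently of the history (since the bound holds uniformly over intermediate sets $S$).

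The second step is to bound the probability of premature termination. The only way to fail is for some epoch with $i < r$ to see $\Lambda$ consecutive failed attempts. In such an epoch the probability of $\Lambda$ straight failures is at most $\bigl(1-(k_0+1)/n\bigr)^{\Lambda}$, and plugging in the definition (\ref{eq:lambda}) of $\Lambda$ yields
\[
   \bigl(1-(k_0+1)/n\bigr)^{\Lambda} \;=\; \exp\!\bigl(\Lambda\ln(1-(k_0+1)/n)\bigr) \;=\; \frac{\delta}{\min\{m,n\}}.
\]
Since there are at most $r \le \min\{m,n\}$ epochs with $i < r$, a union bound gives probability at most $\delta$ of premature termination. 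Hence, with probability at least $1-\delta$, the algorithm reaches the state $|S| = r$; from that point on, one terminating epoch of exactly $\Lambda$ (necessarily failed) attempts sends it to Step~3, after which the reconstruction proceeds exactly as in the RBP algorithm.

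Finally, I would count the inspected entries. Each of the $r$ successful epochs makes at most $\Lambda$ attempts (otherwise the stopping rule would have exited first), and the terminating epoch contributes at most $\Lambda$ more, for a total of at most $(r+1)\Lambda$ attempts in Step~2. Each attempt examines a single column ($m$ entries), contributing at most $m(r+1)\Lambda$ entries. Step~3 requires no new entries, and Step~4 examines the $r$ rows indexed by $\bar S$, contributing at most $nr$ entries. Summing gives the claimed bound $nr + m(r+1)\Lambda$. The only real subtlety is the per-epoch independence claim used in the union bound, but this is immediate: the lower bound $p_i \ge (k_0+1)/n$ holds uniformly over all histories consistent with $|S|=i<r$, so the worst-case analysis of consecutive failures is legitimate without any delicate conditioning.
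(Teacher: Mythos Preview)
Your proof is correct and follows essentially the same approach as the paper's: the same decomposition into epochs, the same $k$--stability argument giving $p_i \ge (k+1)/(n-i) \ge (k_0+1)/n$, and the same deterministic count of at most $(r+1)\Lambda$ column inspections plus $r$ row inspections. The only cosmetic difference is in the aggregation step: the paper bounds the success probability from below as the product $\prod_{i=1}^r\bigl[1-(1-(k+1)/(n-i+1))^\Lambda\bigr]$ (established by induction) and then invokes $(1-\delta/\min\{m,n\})^{\min\{m,n\}} \ge 1-\delta$, whereas you bound the failure probability from above by a union bound over the $r$ epochs---these are two equivalent ways of writing the same estimate.
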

{\bf Remarks}
\begin{enumerate}
   \item Since $\log(1-(k_0+1)/n)\le-(k_0+1)/n$, Theorem \ref{thm:RF-RBP-sample} guarantees that the total number of entries inspected by the RF--RBP algorithm is bounded by:
$$ nr + \frac{mn(r+1)}{k_0+1}\log\left(\frac{\min\{m,n\}}{\delta}\right) $$
In particular, when $\delta$ is inversely proportional to a polynomial in $\min\{m,n\}$, the bound above is of the same order as that obtained for the RBP algorithm (see Theorem \ref{thm:RBP-sample}).

   \item To appreciate the power of Theorem \ref{thm:RF-RBP-sample}, consider an $m\times n$ matrix $A$ whose rank $r$ is known to be much smaller than $\min\{m,n\}$, say, $r\le n/2$.  If $A$ is generic, then by Theorem \ref{thm:stab-typ}, it is $k$--stable, where $k=n-r\ge n/2$.  Hence, by Theorem \ref{thm:RF-RBP-sample}, the matrix $A$ can be exactly reconstructed by the RF--RBP algorithm with high probability, and the number of inspected entries is bounded by $O(nr+mr\log n)$.  Note that such a reconstruction is done without the algorithm knowing the exact value of $r$ or $k$.  By contrast, the algorithm of Keshavan et al.~\cite{KMO09} is much less flexible, as it needs to know the exact value of $r$ in order to guarantee an exact reconstruction.
\end{enumerate}
{\bf Proof of Theorem \ref{thm:RF-RBP-sample}}{\quad} For $j=1,2,\ldots,r$, let $q_j$ be the probability that the RF--RBP algorithm finds at least $j$ basis columns before proceeding to Step 3.  We claim that:
   \begin{equation} \label{eq:RF-RBP-basis-prob}
      q_j \ge \prod_{i=1}^j \left[1-\left(1-\frac{k+1}{n-i+1}\right)^\Lambda\right] \qquad\mbox{for }j=1,\ldots,r
   \end{equation}
The proof of (\ref{eq:RF-RBP-basis-prob}) will proceed by induction on $j$.  To facilitate the proof, let us again divide the execution of Step 2 into epochs, where the $(j-1)$--st epoch (for $j=1,2,\ldots,r$) is defined in exactly the same way as in the proof of Theorem \ref{thm:RBP-sample}.  Furthermore, let $p_j$ be the probability that the column selected in an iteration of the $(j-1)$--st epoch is a basis column.  Since $A$ is assumed to be $k$--stable, we have:
$$ p_j \ge \frac{k+1}{n-j+1} \qquad\mbox{for } i=1,2,\ldots,r $$
Now, for $j=1$, we have:
$$ q_1 = 1-(1-p_1)^\Lambda \ge 1-\left(1-\frac{k+1}{n}\right)^\Lambda $$
and hence the base case holds.  Suppose that (\ref{eq:RF-RBP-basis-prob}) holds for some $j<r$.  Then, conditioned on the event that the RF--RBP algorithm finds at least $j$ basis columns before proceeding to Step 3, the probability that the RF--RBP algorithm finds at least $j+1$ basis columns before proceeding to Step 3 is given by:
$$ q_{j+1}^{cond} = 1-(1-p_{j+1})^\Lambda \ge 1-\left(1-\frac{k+1}{n-j}\right)^{\Lambda} $$
Hence, it follows from the definition of conditional probability and the inductive hypothesis that:
$$ q_{j+1} = q_{j+1}^{cond}\cdot q_j \ge \prod_{i=1}^{j+1} \left[1-\left(1-\frac{k+1}{n-i+1}\right)^\Lambda\right] $$
This completes the proof of (\ref{eq:RF-RBP-basis-prob}).

Now, observe that the RF--RBP algorithm will terminate with an exact reconstruction of $A$ iff it finds $r$ basis columns before proceeding to Step 3.  Using (\ref{eq:RF-RBP-basis-prob}) and the definition of $\Lambda$ in (\ref{eq:lambda}), we see that the probability of such an event is at least:
\begin{eqnarray*}
   \prod_{i=1}^r \left[1-\left(1-\frac{k+1}{n-i+1}\right)^\Lambda\right] &\ge& \left[1-\left(1-\frac{k+1}{n}\right)^\Lambda\right]^r \\
   \noalign{\medskip}
   &\ge& \left(1-\frac{\delta}{\min\{m,n\}}\right)^{\min\{m,n\}} \\
   \noalign{\medskip}
   &\ge& 1-\delta
\end{eqnarray*}
Moreover, the number of distinct columns inspected by the algorithm is always bounded above by $(r+1)\Lambda$, which implies that the total number of entries inspected by the algorithm is bounded above by $nr+m(r+1)\Lambda$.  This completes the proof of Theorem \ref{thm:RF-RBP-sample}. \endproof

\medskip
\noindent Finally, upon following the proof of Theorem \ref{thm:RBP-runtime}, one can easily establish the following complexity result for the RF--RBP algorithm:
\begin{thm} \label{thm:RF-RBP-runtime}
   Given an $m\times n$ matrix $A$ and a stopping threshold $\Lambda\ge1$, the total number of operations performed by the RF--RBP algorithm before it terminates is bounded by $O(mr^2\Lambda + mnr)$, where $r=\mbox{rank}(A)$.
\end{thm}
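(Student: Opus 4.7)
The plan is to follow the same implementation and accounting scheme used for Theorem \ref{thm:RBP-runtime}, substituting the probabilistic bound on the executions of Step 2b by the deterministic bound that is built into the RF--RBP algorithm by the threshold $\Lambda$. Concretely, I would re-use the incremental Gram--Schmidt device: maintain an orthonormal set $\{w_1,\ldots,w_{|S|}\}\subset\R^m$ spanning the current basis columns, so that testing whether a newly inspected column $v\in\R^m$ lies in $\mathrm{span}(S)$ (and, if not, extending the basis) costs $O(|S|\cdot m)$ operations. Initialization is $O(n)$, Row Identification uses another Gram--Schmidt pass on the $m\times|S|$ matrix $A_S$ for $O(m r^2)$ total cost, and Reconstruction mirrors the earlier analysis for $O(r^3+nr^2+mnr)=O(mnr)$ operations. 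All of this is word-for-word the same bookkeeping as in the proof of Theorem \ref{thm:RBP-runtime}, and none of it depends on randomness.

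Next I would bound the total cost of the Basis Pursuit Step. Divide Step 2 into epochs indexed by the current size $i=|S|$ of the basis set, $i=0,1,\ldots,r$. By the algorithm's control flow, within an epoch the counter $\kappa$ is incremented on every failed draw and reset only when a new basis column is found; a draw that pushes $\kappa$ to $\Lambda$ forces an exit to Step 3. Consequently each epoch contains at most $\Lambda$ draws (at most $\Lambda-1$ failures followed by one success, or exactly $\Lambda$ failures in the terminal epoch), and since Step 3 is entered after at most $r+1$ epochs in total, the number of executions of Step 2b is at most $(r+1)\Lambda$. Each such execution during the $i$-th epoch costs $O(im)\le O(rm)$ by the Gram--Schmidt check, so the Basis Pursuit Step contributes $O(mr^2\Lambda)$ operations deterministically.

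Summing the four contributions yields
\[
O(n) \;+\; O(mr^2\Lambda) \;+\; O(mr^2) \;+\; O(mnr) \;=\; O(mr^2\Lambda + mnr),
\]
since $n=O(mnr)$ and $mr^2=O(mr^2\Lambda)$ because $\Lambda\ge 1$. This is the desired bound. I expect no real obstacle here: unlike Theorem \ref{thm:RBP-runtime}, we do not need Proposition \ref{prop:geom-ld} or any concentration argument, because the threshold $\Lambda$ supplies a deterministic cap on Step 2b. The only mild subtlety is getting the per-epoch draw count right (the reset of $\kappa$ at the start of each epoch means one must count an epoch as "at most $\Lambda$ draws," not $\Lambda+1$), after which the arithmetic is routine and does not require any probabilistic reasoning.
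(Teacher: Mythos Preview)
Your proposal is correct and takes essentially the same approach as the paper, which simply says that the result follows ``upon following the proof of Theorem \ref{thm:RBP-runtime}.'' Your epoch-by-epoch count of at most $(r+1)\Lambda$ executions of Step 2b matches the paper's own bound (stated in the proof of Theorem \ref{thm:RF-RBP-sample}) on the number of columns inspected, and the remaining bookkeeping for Row Identification and Reconstruction is identical to that in Theorem \ref{thm:RBP-runtime}.
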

We remark that the bound in Theorem \ref{thm:RF-RBP-runtime} holds for {\it arbitrary} input matrices.  In the case where the input matrix has rank $r$ and is $k$--stable, we can set $\Lambda$ as in (\ref{eq:lambda}) and bound the total number of operations by:
$$ O\left(\frac{mnr^2}{k+1}\log\left(\frac{\min\{m,n\}}{\delta}\right)+mnr\right) $$
In particular, when $\delta$ is inversely proportional to a polynomial in $\min\{m,n\}$, the bound above is of the same order as that obtained for the RBP algorithm (see Theorem \ref{thm:RBP-runtime}).

\section{Conclusion} \label{sec:concl}
In this paper we proposed a randomized basis pursuit (RBP) algorithm for the matrix reconstruction problem.  We introduced the notion of a $k$--stable matrix and showed that the RBP algorithm can reconstruct a $k$--stable rank $r$ $n\times n$ matrix with high probability after inspecting $O((k+1)^{-1}n^2r\log n)$ of its entries.  In addition, we showed that the runtime of the RBP algorithm is bounded by $O((k+1)^{-1}n^2r^2\log n+n^2r)$.  Our results yield substantial improvement over those in existing literature (\cite{CR09,KMO09,CT09}), in the sense that the RBP algorithm can reconstruct a {\it larger class of matrices} by inspecting a {\it smaller number of entries}, and it can do so in a {\it more efficient} manner.  Although the RBP algorithm assumes that the rank of the input matrix is known, we showed that such an assumption can be removed.  Specifically, we proposed a variant of the RBP algorithm that can reconstruct a matrix without knowing the exact value of its rank.  Such a feature offers great flexibility in practical settings.  Finally, it would be interesting to study the tradeoff between the sampling complexity and computational complexity of the matrix reconstruction problem.  Another interesting direction would be to extend our techniques to handle the case where the sampled entries are {\it noisy}.  Some recent results along this direction, which are established using the techniques of \cite{CR09,CT09}, can be found in \cite{CP09}.

\bibliography{sdpbib}
\bibliographystyle{abbrv}

\end{document}